\newcommand{\node}{\mathfrak{n}}
\newcommand{\link}{\mathfrak{l}}
\newcommand{\Hil}{\mathcal{H}}
\newcommand{\nodes}[1]{{\rm Nodes}(#1)}
\newcommand{\links}[1]{{\rm Links}(#1)}
\newcommand{\iu}{\rmi}
\newcommand{\Volume}{\hat{V}}
\newcommand{\sgn}{{\rm sgn}}
\newcommand{\C}[1]{\mathbb{C}#1}
\newcommand{\CG}{\mathbb{C}G}
\newcommand{\SU}{{\rm SU(}2{\rm)}}
\newcommand{\GL}[1]{{\rm GL(}#1{\rm )}}
\newcommand{\End}[1]{{\rm End(}#1{\rm )}}
\newcommand{\id}{\mathbbm{1}}
\newcommand{\braket}[2]{\left< #1 | #2 \right>}
\newcommand{\tjsymbol}[6]{
\left(\begin{array}{ccc}
   #1 & #2 & #3 \\
   #4 & #5 & #6 
\end{array}\right)}
\newcommand{\sjsymbol}[6]{\left\{\begin{array}{ccc}
   #1 & #2 & #3 \\
   #4 & #5 & #6 
\end{array}\right\}}
\newcommand{\binom}[2]{{{#1}\choose{#2}}}
\newcommand{\ket}[1]{| #1 >}
\def\be{\begin{equation}}
\def\ee{\end{equation}}
\newcommand{\Inv}[1]{{\rm Inv}\left( #1 \right)}
\newcommand{\InvL}[2]{{\rm Inv}_{#2}\left( #1 \right)}
\newcommand{\InvLambda}[1]{\InvL{#1}{T_\lambda}}
\newtheorem{theorem}{Theorem}
\newtheorem{lemma}{Lemma}
\def\eqref{\eref}
\begin{document}
\title[Rovelli-Smolin-DePietri volume operator for monochromatic intertwiners]{Properties of the Rovelli-Smolin-DePietri volume operator in the spaces of monochromatic intertwiners}
\author{Marcin Kisielowski}
\address{National Centre for Nuclear Research, Pateura 7, 02-093 Warsaw, Poland}
\begin{abstract}
We study some properties of the Rovelli-Smolin-DePietri volume operator in loop quantum gravity, which significantly simplify the diagonalization problem and shed some light on the pattern of degeneracy of the eigenstates. The operator is defined by its action in the spaces of tensor products $\Hil_{j_1}\otimes \ldots \otimes \Hil_{j_N}$ of the irreducible SU(2) representation spaces $\Hil_{j_i}, i=1,\ldots,N$, labelled with spins $j_i\in \frac{1}{2}\mathbb{N}$. We restrict to spaces of SU(2) invariant tensors (intertwiners) with all spins equal $j_1=\ldots=j_N=j$. We call them spin $j$ monochromatic intertwiners. Such spaces are important in the study of SU(2) gauge invariant states that are isotropic and can be applied to extract the cosmological sector of the theory. In the case of spin $1/2$ we solve the eigenvalue problem completely: we show that the volume operator is proportional to identity and calculate the proportionality factor. 
\end{abstract}
\maketitle
\section{Introduction}
The idea that isotropic states in loop quantum gravity should be described by monochromatic intertwiners appeared in the literature a couple of times. In spin-foam cosmology the conditions of homogeneity and isotropy are imposed on the coherent states \cite{BRV}. The coherent states are linear combinations of spin-network states for different spins but the main contribution comes from states with all spins equal. As a result, the leading order amplitudes involve the monochromatic intertwiners only. In group field theory the states are restricted to isotropic states by considering monochromatic intertwiners maximizing the volume \cite{CosmoGFTI,CosmoGFTII}. In our recent approach \cite{HomogeneousIsotropicLQG} we study homogeneous-isotropic invariant subspaces of the scalar constraint operators proposed in \cite{LewandowskiSahlmann,HamiltonianOperator,NewScalarConstraint,TimeEvolution}. The component of the homogeneous-isotropic invariant space corresponding to spin network with $0$ loops is defined using spin $j$ monochromatic intertwiners. The scalar constraint operator adds and subtracts loops labelled with a given spin $l$. If $j=l$ the entire homogeneous-isotropic invariant space is described by monochromatic spin $j$ intertwiners only. The last example is our main motivation for this work. While in the previous two examples the focus is on 4-valent intertwiners, in our approach it is necessary to study the spaces of monochromatic intertwiners $\Inv{\Hil_j^{\otimes N}}$ with arbitrary valence $N$ (or at least in some range when the cut-off on the number of loops is used). We believe that spin $1/2$ spaces are of particular interest because they seem to describe the smallest homogeneous-isotropic grains of space, from which the quantum geometry of space is built (see \cite{RovelliBook} for the discussion of the quantum grains of space). 

In loop quantum gravity \cite{StatusReport,ThiemannBook,RovelliBook,Zakopane,LQG25,MaReview,AshtekarRovelliReuterRev,rovelli2014covariant} the volume operator plays an important role. It appears in the gravitational as well as in the matter part of the quantum scalar constraint operators. In loop quantum cosmology \cite{LQCI,LQCII,LQCIII,LQCIV} the volume operator is used to construct one of the two fundamental observables. An evolution of the operator in certain coherent states is calculated to show that the theory predicts Big Bounce instead of Big Bang -- the quantum universe contracts until it reaches a minimal volume at which it bounces and starts to expand. It is therefore unavoidable to study the volume operator when analyzing cosmological sector of loop quantum gravity.

There are two proposals for the volume operator: we will call them Rovelli-Smolin-DePietri volume operator \cite{RovelliSmolinVolume,DePietriRovelliVolume} and Ashtekar-Lewandowski volume operator \cite{AshtekarLewandowskiVolume}. The quantum operators correspond to the same classical object. In \cite{AshtekarLewandowskiVolume}, this fact is underlined in the nomenclature: they are called the volume operator in extrinsic and intrinsic regularization, respectively. The operators are defined in the spaces of intertwiners of valence at least 3 and vanish in the spaces of intertwiners of valence 3 \cite{LollThreeValent}. In the case of 4-valent intertwiners the Rovelli-Smolin-DePietri volume operator and Ashtekar-Lewandowski volume operator coincide. In this simplest, non-trivial case some spectral properties of the volume operator were studied analytically \cite{LollVolume,DePietriFourValent,BrunnemannI,BrunnemannIII}. In the case of Rovelli-Smolin-DePietri volume operator and higher valent monochromatic intertwiners, bounds were found for the eigenvalues \cite{MajorSeifert} but more detailed spectral properties in the higher valent cases are usually studied numerically. Our numerical study revealed a property not reported previously in the literature -- the volume operator is proportional to identity matrix in the spaces of spin $1/2$ monochromatic intertwiners. In this paper we prove it analytically and calculate the proportionality factor. The idea is that the Rovelli-Smolin-DePietri volume operator commutes with the permutations of the indices of the invariant tensors. It turns out that the space of spin $1/2$ monochromatic intertwiners $\Inv{\Hil_{\frac{1}{2}}^{\otimes N}}$ is an irreducible representation space of the group of permutations of $N$ elements, denoted here by $S_N$. The property of the volume operator follows from Schur's lemma. In this paper we will give even more general result -- we will find a decomposition of the space of spin $j$ monochromatic intertwiners into subspaces invariant under the action of the volume operator. Our main tool will be techniques used in the context of Schur-Weyl duality as presented in \cite{FultonHarris}. Our results explain to some extent the pattern of degeneracy of eigenstates of the Rovelli-Smolin-DePietri volume operator in the spaces on monochromatic spin $j$ intertwiners.
\section{The volume operator and the space of monochromatic intertwiners}
The volume operator corresponds to classical observable which is a volume of a region $R$ of the space manifold $\Sigma$. There are two proposals for the quantum operator, which we will call Rovelli-Smolin-DePietri volume operator \cite{RovelliSmolinVolume,LollVolume,DePietriRovelliVolume} and Ashtekar-Lewandowski volume operator \cite{AshtekarLewandowskiVolume}. The difference between the operators is just a quantization ambiguity -- they correspond to the same classical object. In \cite{AshtekarLewandowskiVolume} they are called, respectively, the volume operator in the extrinsic and intrinsic regularization. Our result will apply only to the former proposal. In this paper we will call it simply the volume operator.

We start this section by recalling the definition of the Loop Quantum Gravity Hilbert space and both versions of the volume operator (see \sref{sc:def_volume}). In \sref{sc:spin_networks} we recall the definition of the spin networks and discuss the action of the volume operator on the spin-network states. This action is described by a family of operators, each acting in the space of SU(2) invariant tensors (intertwiners) $\Inv{\Hil_{j_1}\otimes\ldots \otimes \Hil_{j_N}}$. In this paper we restrict to monochromatic intertwiners, i.e. to elements of $\Inv{\Hil_{j}^{\otimes N}}$. In \sref{sc:monochromatic_intertwiners} we discuss the action of the permutation group $S_N$ on $\Inv{\Hil_{j}^{\otimes N}}$ and show that the volume operator commutes with this action. The results presented in this paper are based on this observation.

\subsection{Definition of the volume operator}\label{sc:def_volume}
The Hilbert space of LQG is a space of cylindrical functions on the space of SU(2) connections on $\Sigma$ equipped with the Ashtekar-Lewandowski measure \cite{StatusReport,ThiemannBook}. In order to define the Hilbert space one considers oriented graphs $\gamma$ embedded in the space manifold $\Sigma$. A graph $\gamma$ is a collection of oriented compact 1-dimensional submanifolds called links $\links{\gamma}$ \footnote{The submanifolds need to satisfy certain properties, see \cite{StatusReport, LewandowskiSahlmann}.} meeting at the nodes $\nodes{\gamma}$. A complex function $\Psi$ on the space of SU(2) connections on $\Sigma$ is called cylindrical if there is a graph $\gamma$ and a complex function $\psi:{\rm SU(2)}^{N}\to \mathbb{C}$ such that
\be
\Psi(A)=\psi(A_{\link_1},\ldots, A_{\link_N}), 
\ee
where $\link_1,\ldots, \link_N$ are all links of the graph $\gamma$ and $A_\link\in {\rm SU(2)}$ is the parallel transport of the connection $A$ along the link $\link$. We will say that the function $\Psi$ is cylindrical with respect to the graph $\gamma$. Given two graphs $\gamma_1,\gamma_2$ and functions $\Psi_1, \Psi_2$ cylindrical with respect to $\gamma_1$ and $\gamma_2$, respectively, it is always possible to find a graph $\gamma$ such that $\Psi_1$ and $\Psi_2$ are cylindrical with respect to $\gamma$. The scalar constraint between $\Psi_1$ and $\Psi_2$ is defined by
\be\label{eq:scalar_product}
\braket{\Psi_1}{ \Psi_2}=\int_{{\rm SU(2)}^N} dg_1 \ldots dg_N \overline{\Psi_1}(g_1,\ldots, g_N) \Psi_2(g_1,\ldots, g_N),
\ee
where $N$ is the number of links in $\gamma$. The kinematical Hilbert space of LQG is the Cauchy completion of the space of cylindrical functions with respect to the scalar product \eqref{eq:scalar_product}.

The volume operators can be defined using 'angular momentum like' operators $J_{\node, I}^i$:
\be\label{eq:Joperator_definition}
\fl ( J_{\node, I}^i \Psi)(A)=\cases{\iu \frac{d}{dt}\big|_{t=0}\psi(A_{\link_1}, \ldots , A_{\link_I} \exp(t \tau^i), \ldots, A_{\link_N}),& if $\link_I$ is outgoing from $\node$,\\
\iu \frac{d}{dt}\big|_{t=0}\psi(A_{\link_1}, \ldots , \exp(-t \tau^i) A_{\link_I} , \ldots, A_{\link_N}),& if $\link_I$ is incoming to $\node$.}
\ee
Here, $\tau^i$ are su(2) matrices related to the Pauli matrices by $\tau^i=\frac{1}{2 \iu} \sigma^i$. Given a region $R$ in $\Sigma$ the action of the volume operator $\Volume_R$ on function cylindrical with respect to a graph $\gamma$ is a sum over operators, each depending only on small neighbourhood of the corresponding node of the graph:
\be
\Volume_R \Psi = \kappa_0 \left(\frac{8\pi G \hbar \gamma}{c^3}\right)^{\frac{3}{2}} \sum_{\node\in \nodes{\gamma}\cap R} \sqrt{|\hat{q}_\node|} \Psi.
\ee
Both proposals for the volume operator share the general form but differ in the choice of operators $\hat{q}_\node$.
\begin{enumerate}
\item In Ashtekar-Lewandowski volume operator the operator is:
\be
\hat{q}_\node \Psi = \frac{1}{8} \sum_{I=1}^{N_\node} \sum_{J=I}^{N_\node} \sum_{K=J}^{N_\node} \epsilon(\dot{\link}_I, \dot{\link}_J,\dot{\link}_K) \epsilon_{ijk} J^i_{\node,I} J^j_{\node,J} J^k_{\node,K},
\ee
where $\epsilon(\dot{\link}_I, \dot{\link}_J,\dot{\link}_K)=\sgn (\omega(\dot{\link}_I, \dot{\link}_J,\dot{\link}_K) ),$ $\omega$ is the orientation 3-form on $\Sigma$ and $\dot{\link}_I,\dot{\link}_J, \dot{\link}_K$ are vectors tangent to the links $\link_I,\link_J, \link_K$ at the node $\node$. 
\item In the Rovelli-Smolin-DePietri volume operator the operator is:
\be
\hat{q}_\node \Psi = \frac{1}{8} \sum_{I=1}^{N_\node} \sum_{J=I}^{N_\node} \sum_{K=J}^{N_\node} |\epsilon_{ijk} J^i_{\node,I} J^j_{\node,J} J^k_{\node,K}|.
\ee
Let us notice that the operator does not depend on the embedding of the graph in $\Sigma$.
\end{enumerate} 
\subsection{Spin networks}\label{sc:spin_networks}
The kinematical Hilbert space of Loop Quantum Gravity is spanned by the gauge-variant spin-network states \footnote{Typically a spin-network state is gauge invariant (see for example \cite{BaezSN,SFLQG}). On the other hand in the standard text-books \cite{StatusReport, ThiemannBook} the spin networks that are not gauge invariant are considered. We follow the nomenclature from \cite{ThiemannBook} and use a notion of gauge-variant spin networks.}. A gauge-variant spin network $s$ is a triple $(\gamma,\rho,\iota)$:
\begin{itemize}
\item an oriented graph $\gamma$ embedded in the space manifold $\Sigma$,
\item a labelling $\rho$ of the links of $\gamma$ with unitary irreducible representations of the SU(2) group:
\be
\links{\gamma} \ni \link \mapsto \rho_{\link} \in \GL{\Hil_{\link}},
\ee
where $\GL{V}$ is the space of linear automorphisms of $V$, $\Hil_{\link}$ is the Hilbert space in which the SU(2) representation acts,
\item a labelling $\iota$ of the nodes of $\gamma$ with tensors:
\be
\nodes{\gamma} \ni \node \mapsto \iota_\node \in \Hil_{\link_1}^*\otimes \ldots \Hil_{\link_M}^* \otimes \Hil_{\link_{M+1}} \otimes \ldots \otimes \Hil_{\link_N}, 
\ee
where the links $\link_1,\ldots ,\link_M$ are incoming to the node $\node$ and the links $\link_{M+1},\ldots, \link_{N}$ are outgoing from the node, $\Hil^*_\link$ denotes the Hilbert space dual to $\Hil_{\link}$.
\end{itemize}
A gauge-variant spin network such that each tensor $\iota_\node$ is invariant with respect to the action of the SU(2) group will be simply called a spin network. In this case we will write $\iota_\node\in \Inv{\Hil_{\link_1}^*\otimes \ldots \Hil_{\link_M}^* \otimes \Hil_{\link_{M+1}} \otimes \ldots \otimes \Hil_{\link_N}}$. A subspace spanned by all spin networks is the space of solutions of the Gauss constraint.

Each gauge-variant spin network $s$ defines a complex function $\psi_s:{\rm SU(2)}^{N}\to \mathbb{C}$ by a unique contraction \cite{SFLQG}:
\be\label{eq:spinnetwork_contraction}
\psi_s(u_{\link_1},\ldots, u_{\link_N})=\left( \bigotimes_{\link\in\links{\gamma}} \rho_\link(u_\link)\right) \lrcorner \left( \bigotimes_{\node\in\nodes{\gamma}} \iota_\node \right), \quad u_{\link_i}\in \SU.
\ee
The corresponding gauge-variant spin-network state is
\be
\Psi_s(A)=\psi_s(A_{\link_1},\ldots, A_{\link_N}).
\ee
The action of the operator $\hat{q}_\node$ on a gauge-variant spin-network state $\Psi_s$ is defined by its action on the space of tensors associated to the node $\node$. Let us focus on a single node $\node$ and assume for simplicity that all links are outgoing from $\node$. By applying the definition of the operator $ J_{\node, I}^i $ \eqref{eq:Joperator_definition} to the spin-network contraction \eqref{eq:spinnetwork_contraction} we obtain:
\be
(J_{\node, I}^i \Psi_s)(A)=\Psi_{s'}(A),
\ee
where $s'=(\gamma',\rho',\iota')$ is the same as $s=(\gamma,\rho,\iota)$ except for the tensor $\iota'_\node$ which is related to $\iota_\node$ by the following relation:
\be
\iota'_\node= (\id \otimes\ldots \otimes \id\otimes \rho'_{I}(\tau^i)\otimes \id \otimes \ldots \otimes \id) \cdot \iota_\node.
\ee 
In the formula above $\rho_I$ equals $\rho_{\link_{\node,I}}$ for the unique link $\link_{\node,I}$ defined by the pair $(\node,I)$ and 
\be
{\rm su}(2)\ni \tau \mapsto \rho'(\tau)= \frac{d}{dt}\big|_{t=0} \rho(\exp(t \tau)) \in {\rm End}(\Hil_{\link_{\node,I}})
\ee
is the representation of the su(2) Lie algebra induced by the representation $\rho$ of $\SU$. We will denote by $J^i_I$ a linear operator in the space $\Hil_{\rho_1}\otimes\ldots\otimes \Hil_{\rho_N}$ defined by:
\be
J^i_I =  \id \otimes\ldots \otimes \id\otimes \rho'_{I}(\tau^i)\otimes \id \otimes \ldots \otimes \id.
\ee
In the following, for each spin $j$ we will fix a representation of the $\SU$ group (a common choice is to take the Wigner $D$-matrices):
\be 
\rho_j:\SU\to \GL{\Hil_j}.
\ee
The Rovelli-Smolin-DePietri volume operator is defined by a family of operators $V_{\vec{j}},\ \vec{j}=\{j_1,\ldots,j_N\}$ each defined on the space $\Hil_{j_1}\otimes \ldots \otimes \Hil_{j_N}$:
\be
V_{\vec{j}} = \kappa_0 \left(\frac{8\pi G \hbar \gamma}{c^3}\right)^{\frac{3}{2}}\sqrt{|\hat{q}_{\vec{j}}|},
\ee
where
\be
\hat{q}_{\vec{j}}=\frac{1}{8} \sum_{I=1}^{N} \sum_{J=I}^{N} \sum_{K=J}^{N} |\epsilon_{ijk} J^i_{I} J^j_{J} J^k_{K}|.
\ee
It is straightforward to check that $V_{\vec{j}}$ commutes  with the left action of the $\SU$ group:
\be
V_{\vec{j}}\, \rho_{j_1}(u) \otimes \ldots \otimes \rho_{j_N}(u)  = \rho_{j_1}(u) \otimes \ldots \otimes \rho_{j_N}(u)\, V_{\vec{j}} \quad \forall u\in \SU.
\ee
As a result, the space $\Inv{\Hil_{j_1}\otimes\ldots\otimes\Hil_{j_N}}$ is an invariant space of the operator and $V_{\vec{j}}$ descends to an operator on $\Inv{\Hil_{j_1}\otimes\ldots\otimes\Hil_{j_N}}$.
\subsection{The space of monochromatic intertwiners}\label{sc:monochromatic_intertwiners}
In this paper we will study the volume operator in the spaces of monochromatic intertwiners. Following \cite{MajorSeifert} we will say that a space of intertwiners $\Inv{\Hil_{j_1}\otimes \ldots \otimes \Hil_{j_N}}$ is monochromatic if all spins are equal $j_1=j_2=\ldots=j_N=j$. 

There is a natural right action of the permutation group $S_N$ in the space of monochromatic intertwiners. Consider the right action of $S_N$ in the space $\Hil_j^{\otimes N}$:
\be
(v_1 \otimes \ldots \otimes v_N) \cdot \sigma = v_{\sigma(1)}\otimes \ldots \otimes v_{\sigma(N)}.
\ee
This action commutes with the left action of the $\GL{\Hil_j}$ group. In particular for any $u\in \SU$ and $\iota\in \Hil_j^{\otimes N}$ we have:
\be
\rho_j(u)^{\otimes N} \cdot  \left(\iota \cdot \sigma \right)=\left(\rho_j(u)^{\otimes N} \cdot \iota \right) \cdot \sigma.
\ee
For $\iota \in \Inv{\Hil_j^{\otimes N}}$ we have $\rho_j(u)^{\otimes N} \cdot \iota =\iota$ and therefore:
\be
\rho_j(u)^{\otimes N} \cdot  \left(\iota \cdot \sigma \right) =\iota \cdot \sigma.
\ee
This means that if $\iota\in  \Inv{\Hil_j^{\otimes N}}$ then $\iota\cdot \sigma \in  \Inv{\Hil_j^{\otimes N}}$. Let us notice that the resulting representation of the group $S_N$ in $\Inv{\Hil_j^{\otimes N}}$ is unitary (in the canonical scalar product pulled-back with the canonical embedding $\Inv{\Hil_j^{\otimes N}}\subset \Hil_j^{\otimes N}$).

The operators $J^i_I$ are covariant under the action of the group $S_N$:
\be
J^i_I \left((v_1 \otimes \ldots \otimes v_N) \cdot \sigma\right)=\left(J^i_{\sigma(I)} (v_1 \otimes \ldots \otimes v_N)\right)\cdot \sigma.
\ee
It is straightforward to verify that the operator $V_{\vec{j}}$ is invariant under this action:
\begin{equation}
V_{\vec{j}} \left((v_1 \otimes \ldots \otimes v_N) \cdot \sigma\right)=\left(V_{\vec{j}} (v_1 \otimes \ldots \otimes v_N)\right)\cdot \sigma.\label{eq:volume_invariance}
\end{equation}
Indeed, if we write $\hat{q}_{\vec{j}}$ in an equivalent form:
\be
\hat{q}_{\vec{j}}=\frac{1}{48} \sum_{I\neq J, I\neq K, J\neq K } |\epsilon_{ijk} J^i_{I} J^j_{J} J^k_{K}|,
\ee
we obtain:
\begin{eqnarray}
\fl \hat{q}_{\vec{j}} \left((v_1 \otimes \ldots \otimes v_N) \cdot \sigma\right)=\left( \frac{1}{48}\sum_{I\neq J, I\neq K, J\neq K }|\epsilon_{ijk} J^i_{\sigma(I)} J^j_{\sigma(J)} J^k_{\sigma(K)}|\cdot (v_1 \otimes \ldots \otimes v_N)\right)\cdot \sigma =\nonumber\\= \left(\hat{q}_{\vec{j}}  (v_1 \otimes \ldots \otimes v_N)\right)\cdot \sigma.
\end{eqnarray}
In the last equality we used the fact the $\sigma$ is a bijection and we changed the sum to be over $I'=\sigma(I),\, J'=\sigma(J),\, K'=\sigma(K) $. Let us notice that the invariance of $V_{\vec{j}}$ under the action of the permutation group holds only for the Rovelli-Smolin-DePietri volume operator.

The property \eqref{eq:volume_invariance} can be expressed by saying that $V_{\vec{j}}$ is an element of the commutator algebra (as defined in \cite{FultonHarris}):
\be
B=\{\varphi\in\End{U} : \varphi(v\cdot g)=\varphi(v)\cdot g, \forall v\in U, g\in G\},
\ee
where $U=\Inv{\Hil_j^{\otimes N}}, G=S_N$. In order to understand this notion better, let us consider some examples. If $U$ is an irreducible (right) representation space of $G$, $d=\dim U$, then by Schur's lemma it follows that
\be
B=\{ c \id_{d}: c\in \mathbb{C}\} \cong \mathbb{C}.
\ee
In similar manner if $U=U_i^{\oplus n_i}$, where $U_i$ is an irreducible (right) representation space of $G$, then (see \cite{Sagan} for more details)
\be
B=\{ m \otimes \id_{d_i}: m\in {\rm M}_{n_i}(\mathbb{C})\} \cong {\rm M}_{n_i}(\mathbb{C}),
\ee
where ${\rm M}_{n_i}(\mathbb{C})$ denotes the ring of $n_i\times n_i$ complex matrices. In general if $U=\bigoplus_i U_i^{\oplus n_i}$ is the decomposition of $U$ into irreducible subspaces, then
\be
B\cong \bigoplus_i {\rm M}_{n_i}(\mathbb{C}).
\ee

\section{Decomposition of the space of monochromatic intertwiners}
The main tool in this paper is the decomposition of the space of monochromatic intertwiners into $B$-irreducible spaces, where $B$ is the commutator algebra introduced in \sref{sc:monochromatic_intertwiners}.  In the basis adapted to this decomposition $V_{\vec{j}}$ has a block diagonal form. This property can be used to simplify the diagonalization problem. 

We will find the decomposition by applying some techniques described in \cite{FultonHarris} in the context of Schur-Weyl duality, which relates the finite dimensional representations of general linear and permutation groups. The representation theory of the permutation group is typically formulated using the language of group algebra and its representations rather than the group and its representations \cite{FultonHarris,Sagan}. Therefore we will introduce first the concept of the group algebra $\CG$ (see \sref{sc:group_algebra}). The irreducible representation spaces of the permutation group are labelled with the Young diagrams, which will be discussed in \sref{sc:Young_diagrams}. In \sref{sc:permutation_group_representations} we will discuss the representations of the permutation group. The decomposition will be presented in \sref{sc:decomposition} and in \sref{sc:application_of_decomposition} we will discuss its consequences for the eigenvalue problem. 
\subsection{The group algebra}\label{sc:group_algebra}
Let us assume that the group $G$ is finite, for example $G=S_N$. A group algebra $\CG$ is a vector space with basis $e_g, g\in G$:
\be
\CG=\{c_1 e_{g_1} + \ldots + c_n e_{g_n}: c_i \in\mathbb{C}, G=\{g_1,\ldots g_n\} \}
\ee
equipped with multiplication such that:
\be
 e_{g_1} \cdot e_{g_2} = e_{g_1 g_2}.
\ee
A representation $\tilde{\rho}$ of a group algebra $\CG$ on a vector space $V$ is an algebra homomorphism:
\be
\tilde{\rho}:\CG\to {\rm End}(V). 
\ee
A representation $\rho:G\to{\rm GL}(V)$ of a group $G$ extends by linearity to a representation $\tilde{\rho}:\CG\to {\rm End}(V)$ of the group algebra $\CG$. Any statement about a representation of a group $G$ has an equivalent statement in the group algebra $\CG$. 

A left action of $\CG$ on $V$ makes it a left $\CG$-module. For completeness we will recall the definition of a module. Let $A$ be a ring. An Abelian group $U$ is a left $A$-module if there is a mapping $A\times U \to U, (a,u)\mapsto a u$ such that:
\begin{itemize}
\item $a(u_1 + u_2)= a u_1 + a u_2$,
\item $(a_1 + a_2) u=a_1 u + a_2 u$,
\item $a_1 (a_2 u) = (a_1 a_2) u$,
\item $1 u = u$.
\end{itemize}
Clearly $\CG$ is a ring (because any algebra is a ring), $V$ is an Abelian group (with respect to the addition of vectors) and the action of $\CG$ on $V$ satisfies the axioms above.

We can make $V$ also a right $\CG$-module using an anti-involution:
\be 
a=\sum_{g\in G} a_g e_g \mapsto \hat{a}=\sum_{g\in G} a_g e_{g^{-1}}.
\ee
Left $\CG$-module is turned into right $\CG$-module by taking:
\be
v \cdot \hat{a}=a\cdot v. 
\ee

A (left) regular representation of a finite group $G$ is a group homomorphism 
\be
\rho:G\to {\rm GL}(\CG)
\ee
given by
\be
\rho(g) e_{g_i}=e_{g g_i}. 
\ee
In the language of group algebras: the regular representation is a left $\CG$-module over $\CG$. The regular representation $R$ can be decomposed into its irreducible components $W_i$. Each representation space $W_i$ appears $\dim W_i$ times:
\be
R =\bigoplus_i W_i ^{\oplus \dim W_i}.
\ee 
Each irreducible representation is isomorphic to a minimal left ideal in $\CG$. Let us recall that a (left) ideal in a ring $A$ is a subring $I$ closed under (left) multiplication by elements of the ring $A$: $A I = I$. Each (left) ideal in $\CG$ is generated by an idempotent. This means in particular that for each space $W_i$ there is $e_i\in A$ such that $e_i^2 = e_i$ and
\be
W_i = A e_i.
\ee We will present a construction of such idempotents for the permutation group $S_d$, called Young symmetrizers. We will introduce first a convenient tool called Young diagrams and Young tableaux. 
\subsection{Young diagram and Young tableau}\label{sc:Young_diagrams}
A Young diagram is associated to a partition 
\be 
(\lambda_1, \ldots , \lambda_k), \quad \lambda_1 \geq \lambda_2 \geq \ldots \geq \lambda_k \geq 1, \lambda_i \in \mathbb{N}
\ee
of a number $d\in \mathbb{N}$:
\be
d=\lambda_1 +\ldots + \lambda_k. 
\ee
A Young diagram has $\lambda_i$ boxes in the $i$-th row. For example, the Young diagram corresponding to the partition $12=5+3+3+1$ is:
\begin{center}
\ydiagram{5, 3, 3, 1}
\end{center}
The conjugate partition $\lambda'=(\lambda'_1,\ldots ,\lambda'_r)$ is obtained by interchanging rows and columns in the Young diagram. For example, the Young diagram 
\begin{center}
\ydiagram{4, 3, 3, 1,1}
\end{center}
corresponds to the partition conjugate to $12=5+3+3+1$. A Young tableau is obtained from a Young diagram $\lambda$ by inserting numbers $1,\ldots, d$ into the boxes. Following \cite{Sagan} we will call a $\lambda$-tableau any Young tableau on the Young diagram corresponding to a partition $\lambda$. For example:
\begin{equation}
\label{eq:canonical_tableau}
\begin{ytableau}
1 & 2 & 3 &4 &5 \\
6&7&8\\
9&10&11\\
12
\end{ytableau}
\end{equation}
is a Young tableau on the Young diagram $\lambda=(5,3,3,1)$. Another Young tableau on this diagram could be:
\be\label{eq:example_tableau}
\begin{ytableau}
1 & 2 & 6 &7 &5 \\
10&11&8\\
3&4&9\\
12
\end{ytableau}
\ee
Young tableau is called canonical if the boxes are labelled by consecutive numbers. For example the tableau \eqref{eq:canonical_tableau} is the canonical Young tableau on the Young diagram $\lambda=(5,3,3,1)$.  Young tableau is called standard if all rows and columns are increasing. Each canonical tableau is standard. Another example of standard tableau on this diagram is:
\be
\begin{ytableau}
1 & 2 & 5 &6 &7 \\
3&4&8\\
9&10&12\\
11
\end{ytableau}
\ee
The tableau \eqref{eq:example_tableau} is not standard. 
\subsection{Representations of the permutation group}\label{sc:permutation_group_representations}
The irreducible representations of the permutation group $S_d$ can be constructed using the Young symmetrizers. For a given Young tableau $t^\lambda$, two subgroups of $S_d$ are constructed
\be
P(t^\lambda) = \{ g\in S_d: g\textrm{ preserves each row of } t^\lambda\}
\ee
and
\be
Q(t^\lambda) = \{ g\in S_d: g\textrm{ preserves each column of } t^\lambda\}.
\ee
To these subgroups there correspond two elements in the group algebra $\C{S_d}$:
\be
a(t^\lambda)=\sum_{g\in P(t^\lambda)} e_g,\quad b(t^\lambda)=\sum_{g\in Q(t^\lambda)} \sgn(g) e_g.
\ee
A Young symmetrizer is a product of these elements:
\be
c(t^\lambda)=a(t^\lambda) \cdot b(t^\lambda) \in \C{S_d}. 
\ee
Let $t^\lambda_{\rm can}$ be the canonical $\lambda$-tableau (let us recall that in the canonical Young tableau the boxes of the Young diagram $\lambda$ are labelled by consecutive numbers). The following notation is used:
\be
\fl P_\lambda= P(t^\lambda_{\rm can}),\quad Q_\lambda= Q(t^\lambda_{\rm can}),\quad a_\lambda = a(t^\lambda_{\rm can}),\quad b_\lambda=b(t^\lambda_{\rm can}),\quad c_\lambda=c(t^\lambda_{\rm can}).
\ee
A central role is played by the following theorem (see Theorem 4.3 in \cite{FultonHarris}).
\begin{theorem}
The Young symmetrizer $c_\lambda$ is an idempotent up to a scalar factor:
\be
c_\lambda^2 = n_\lambda c_\lambda. 
\ee
The image of $c_\lambda$ by right multiplication on $\C{S_d}$ is an irreducible representation $V_\lambda$ of $S_d$. Every irreducible representation can be obtained this way for a unique partition.
\end{theorem}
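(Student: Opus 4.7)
The plan is to follow the classical route via a scalar lemma for the two factors $a_\lambda$ and $b_\lambda$. First I would record the equivariance identities $p \cdot a_\lambda = a_\lambda \cdot p = a_\lambda$ for all $p \in P_\lambda$ and $q \cdot b_\lambda = b_\lambda \cdot q = \sgn(q)\, b_\lambda$ for all $q \in Q_\lambda$, which are immediate from the subgroup structure of $P_\lambda$ and $Q_\lambda$ together with the definitions of $a_\lambda$ and $b_\lambda$. The key technical lemma I would then establish is: for every $x \in \C{S_d}$, the product $a_\lambda \cdot x \cdot b_\lambda$ is a scalar multiple of $c_\lambda$. Expanding $x$ in the basis $e_g$, this reduces to evaluating $a_\lambda \cdot e_g \cdot b_\lambda$ for each $g \in S_d$. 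When $g = pq$ with $p \in P_\lambda$, $q \in Q_\lambda$, the equivariance gives $a_\lambda e_g b_\lambda = \sgn(q)\, c_\lambda$; the content of the lemma is that in all other cases the product vanishes.

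The main obstacle is precisely this vanishing statement, which rests on a combinatorial property of Young tableaux. If $g$ does not admit a factorization $g = pq$, one shows that there must exist two entries of $t^\lambda_{\rm can}$ lying in a common row such that $g \cdot t^\lambda_{\rm can}$ places them in a common column. The transposition $\tau$ of these two entries lies in $P_\lambda$, while $\tau' := g^{-1} \tau g$ lies in $Q_\lambda$ with $\sgn(\tau') = -1$; then $a_\lambda e_g b_\lambda = (a_\lambda \tau) e_g b_\lambda = a_\lambda e_g (\tau' b_\lambda) = -\,a_\lambda e_g b_\lambda$, forcing it to vanish. Granted this combinatorial step, the idempotency $c_\lambda^2 = n_\lambda c_\lambda$ follows immediately, since $c_\lambda^2 = a_\lambda (b_\lambda a_\lambda) b_\lambda$ is of the form covered by the scalar lemma. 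The nonvanishing of $n_\lambda$ is verified by a trace computation: the operator on $\C{S_d}$ of right multiplication by $c_\lambda$ has trace $d!$ (only the term $e \in P_\lambda Q_\lambda$ contributes through $a_\lambda b_\lambda$), so $c_\lambda$ cannot be nilpotent, and one obtains as a byproduct $n_\lambda = d!/\dim V_\lambda$.

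For irreducibility of $V_\lambda$, I would view it as the left ideal $\C{S_d}\, c_\lambda$ (the analysis for the right version is symmetric). For any nonzero $\C{S_d}$-submodule $W \subseteq V_\lambda$, the scalar lemma gives $c_\lambda W \subseteq c_\lambda V_\lambda \subseteq \mathbb{C}\, c_\lambda$, so either $c_\lambda W = 0$ or $c_\lambda W = \mathbb{C}\, c_\lambda$. In the second case there is $w \in W$ with $c_\lambda w = c_\lambda$, and since $W$ is a left module, $c_\lambda = c_\lambda w \in W$, whence $W = \C{S_d}\, c_\lambda = V_\lambda$. In the first case $W$ is a left ideal satisfying $W \cdot W = 0$; but then, for any $w \in W$, right multiplication by $w$ on $\C{S_d}$ maps into $W$ and squares to zero, hence has trace zero, which forces all coefficients of $w$ in the group basis to vanish. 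This rules out nonzero nilpotent left ideals in $\C{S_d}$ and gives $W = 0$.

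Finally, for the uniqueness of $\lambda$, I would order partitions lexicographically and show that for $\lambda > \mu$ one has $c_\lambda \cdot \C{S_d} \cdot c_\mu = 0$ by the same pigeonhole argument on tableaux: $\lambda_1 > \mu_1$ (or the analogous comparison at the first position of disagreement) forces, for every $g$, two numbers in some row of $t^\lambda_{\rm can}$ to land in a common column of $g \cdot t^\mu_{\rm can}$, and the sign cancellation above gives the vanishing. Since $c_\lambda V_\lambda \neq 0$ but $c_\lambda V_\mu = c_\lambda \cdot \C{S_d} \cdot c_\mu = 0$, the modules $V_\lambda$ and $V_\mu$ cannot be isomorphic. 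Completeness — that every irreducible representation of $S_d$ arises this way — then follows by counting: the number of partitions of $d$ equals the number of conjugacy classes (cycle types) of $S_d$, which in turn equals the number of inequivalent irreducible representations.
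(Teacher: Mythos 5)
Your proposal is correct, and it is essentially the same argument as the one the paper relies on: the paper itself gives no proof, quoting the statement as Theorem 4.3 of Fulton--Harris, and your route (the scalar lemma $a_\lambda\, x\, b_\lambda \in \mathbb{C}\, c_\lambda$ via the von Neumann row/column combinatorial lemma, the trace computation giving $n_\lambda = d!/\dim V_\lambda$, irreducibility through $c_\lambda W \subseteq \mathbb{C}\, c_\lambda$ plus the absence of nilpotent left ideals, lexicographic ordering for non-isomorphism, and conjugacy-class counting for completeness) is precisely the standard proof in that reference. The only compressed points are the combinatorial lemma itself, which you state but do not prove, and the claim that trace zero of right multiplication by $w$ kills all coefficients of $w$ --- a single trace only kills the coefficient of the identity, so one must apply the same nilpotency/trace argument to $e_{g^{-1}}\, w \in W$ for each $g$ (or invoke semisimplicity); both repairs are routine.
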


The number $n_\lambda$ in the theorem is given by
\be
n_\lambda=\frac{d!}{\dim V_\lambda}. 
\ee
There is a convenient expression for the dimension of the space $V_\lambda$, called the hook length formula. The hook length of a box in a Young diagram is the number of boxes directly below or directly to the right from the box, including the box once. For example, in the following diagram each box is labelled by its hook length:
\be
 \begin{ytableau}
8 & 7 & 6 &2 &1 \\
5&3&2\\
4&2&1\\
1
\end{ytableau}
\ee
The dimension of the space $V_\lambda$ is given by
\be\label{eq:dimVlambda}
\dim V_\lambda = \frac{d!}{\prod \textrm{(Hook lengths)}}.
\ee
For the Young diagram $\lambda=(5,3,3,1)$ above the formula becomes:
\be
\dim V_{\lambda }=\frac{12!}{8\cdot 7 \cdot 6 \cdot 2 \cdot 5 \cdot 3 \cdot 2 \cdot 4 \cdot 2}.
\ee
From the theorem follows that $\C{S_d}$ has the following decomposition:
\be\label{eq:CSd_splitting}
\C{S_d}\cong\bigoplus_\lambda V_\lambda^{\oplus \dim V_\lambda},
\ee
where the external direct sum is over all partitions of $d$ and $V_\lambda=A c_\lambda, \, A=\C{S_d}$. It turns out that the dimension of $V_\lambda$ is also equal to the number of standard $\lambda$-tableaux (for the definition of the standard tableau see the previous subsection). In fact, $A=\C{S_d}$ is a direct sum of left ideals $V_{T_\lambda}=A e_{T_\lambda}$, where 
\be\label{eq:idempotents}
e_{T_\lambda}=\frac{\dim V_\lambda}{ d!} c_{T_\lambda}
\ee and $T_\lambda$ runs through all the standard $\lambda$-tableaux:
\be\label{eq:CSd_splitting_equality}
 \C{S_d}=\bigoplus_\lambda \bigoplus_{T_\lambda} V_{T_\lambda}.
\ee
Each element $a\in \C{S_d}$ can be uniquely written as
\be\label{eq:a_splitting}
a =\sum_\lambda \sum_{T_\lambda} a_{T_\lambda}, 
\ee
where $a_\lambda = a e_{T_\lambda} \in V_{T_\lambda}$. Because of the property \eqref{eq:a_splitting}, we used an equality sign in the equation \eqref{eq:CSd_splitting_equality}, while in \eqref{eq:CSd_splitting} we used $\cong$ sign denoting an isomorphism. Let us notice that for all standard Young tableaux on a diagram $\lambda$ the spaces $V_{T_\lambda}$ are isomorphic to $V_\lambda$: $V_{T_\lambda}\cong V_\lambda$.

\subsection{Decomposition of the space of intertwiners}\label{sc:decomposition}
Fundamental role in our approach is played by Lemma 6.22 from \cite{FultonHarris}. In \cite{FultonHarris} it is used in the context of Schur-Weyl duality. We will be interested in this lemma for its own sake.

We will focus on the group of permutations but in the following lemma, $G$ is any finite group. Let us recall that a map $\varphi:U\to V$ is called a homomorphism of right modules if it is a homomorphism of the Abelian groups $U, V$ and satisfies
\be
\varphi(u\cdot a)=\varphi(u)\cdot a.
\ee
If $U$ is a right-module over $A=\C{G}$, the commutator algebra $B$ is the algebra of automorphisms of $U$. Let us recall that in \sref{sc:monochromatic_intertwiners} we used an equivalent definition:
\be
B={\rm Hom}_G(U,U)=\{\varphi:U\to U ; \varphi(v\cdot g)=\varphi(v)\cdot g, \forall v\in U, g\in G\}. 
\ee

Let $W$ be any left $A$-module. Consider a submodule $N$ of a tensor product $U\otimes_{\mathbb{C}}W$ generated by $\{ v a \otimes w - v\otimes a w\}$. The tensor product:
\be
U\otimes_A W = \left( U\otimes_{\mathbb{C}}W\right) \slash N
\ee
is a left $B$-module:
\be
b\cdot (v\otimes w) = (b\cdot v) \otimes w.
\ee
Let us notice that $U\otimes_A A$ and $U$ are isomorphic. The canonical isomorphism $\pi:U\otimes_A A \to U$ is
\be\label{eq:canonical_isomorphism}
\pi(v\otimes a)=v\cdot a.
\ee
\begin{lemma}\label{lm:lemma_decomposition}
Let $U$ be a finite-dimensional right $A$-module.
\begin{enumerate}
\item For any $c\in A$ the restriction of the canonical map $\pi$ to $U\otimes_A A c$  is an isomorphisms of left $B$-modules $U\otimes_A A c$ and $U c$.
\item\label{lm:UWirreducible} If $W=Ac$ is an irreducible left $A$-module, then $U\otimes_A W$ is an irreducible left $B$-module.
\item\label{lm:U_splitting} If $W_i=A c_i$ are the irreducible left $A$-modules in the decomposition $A\cong \bigoplus_i W_i^{\oplus m_i},$ then 
\be
U\cong \bigoplus_i( U \otimes_A W_i)^{\oplus m_i} \cong \bigoplus_i( U c_i)^{\oplus m_i}
\ee
is the decomposition of $U$ into irreducible left $B$-modules.
\end{enumerate}
\end{lemma}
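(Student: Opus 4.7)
The plan is to prove the three parts sequentially, leaning throughout on the semisimplicity of $A = \mathbb{C} G$ (Maschke's theorem) and on the Wedderburn decomposition of $A$ together with the double-centralizer description of $B$. For part~(i), the canonical map $\pi : U \otimes_A A \to U$ of \eqref{eq:canonical_isomorphism} is well known to be an isomorphism of left $B$-modules, with inverse $v \mapsto v \otimes 1$. Restricting $\pi$ to $U \otimes_A Ac$, the image equals $\{\, v\cdot(ac): v \in U,\, a \in A\,\} = U A c = Uc$, using $U \cdot A = U$. For injectivity, Maschke's theorem supplies a splitting $A = Ac \oplus I$ of left $A$-modules, hence $U \otimes_A A = (U \otimes_A Ac) \oplus (U \otimes_A I)$, and the restriction of the isomorphism $\pi$ to each summand is itself injective.

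For part~(ii), by part~(i) it suffices to show that $Uc$ is an irreducible left $B$-module whenever $Ac$ is an irreducible left $A$-module. The Wedderburn decomposition $A \cong \bigoplus_\lambda \mathrm{End}(W_\lambda)$, with $W_\lambda$ ranging over the irreducible left $A$-modules, sends $c$ to a tuple $(c_\lambda)_\lambda$. Irreducibility of $Ac$ forces all but one component $c_{\lambda_0}$ to vanish and $c_{\lambda_0}$ to have rank one in $\mathrm{End}(W_{\lambda_0})$, since the minimal left ideals of a matrix algebra are its column spaces. Writing $U \cong \bigoplus_\lambda V_\lambda \otimes_\mathbb{C} W_\lambda^{\ast}$ as a right $A$-module with multiplicity spaces $V_\lambda$, the double-centralizer theorem gives $B \cong \bigoplus_\lambda \mathrm{End}(V_\lambda)$, acting as $\mathrm{End}(V_\lambda) \otimes \mathrm{id}$ on the $\lambda$-summand. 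Right multiplication by $c$ annihilates every summand with $\lambda \neq \lambda_0$ and, on the $\lambda_0$-summand, has image $V_{\lambda_0} \otimes L$ for a one-dimensional subspace $L \subseteq W_{\lambda_0}^{\ast}$. Hence $Uc \cong V_{\lambda_0}$ as a left $B$-module, and this is irreducible.

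For part~(iii), semisimplicity of $A$ makes the functor $U \otimes_A -$ exact, so applying it to $A \cong \bigoplus_i W_i^{\oplus m_i}$ and invoking the canonical isomorphism $U \otimes_A A \cong U$ yields $U \cong \bigoplus_i (U \otimes_A W_i)^{\oplus m_i} \cong \bigoplus_i (U c_i)^{\oplus m_i}$, the last step by~(i); each summand is $B$-irreducible by~(ii). The genuine technical obstacle is part~(ii): one must invoke the double-centralizer picture to describe $B$ concretely and then verify that right multiplication by an arbitrary $c$ with $Ac$ irreducible concentrates $Uc$ in a single isotypical component and acts there with one-dimensional image on each irreducible factor. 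Parts~(i) and~(iii) are then essentially formal consequences of semisimplicity.
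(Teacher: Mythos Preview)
Your proof is correct. Note, however, that the paper does not actually prove this lemma: immediately after the statement the author writes ``The proof of this lemma can be found in \cite{FultonHarris}. We will not repeat it here.'' There is therefore no in-paper argument to compare against; your argument stands on its own.

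Your route is a standard one. Part~(i) follows from the canonical isomorphism $U\otimes_A A\cong U$ together with the semisimple splitting $A=Ac\oplus I$, which guarantees that $U\otimes_A Ac$ genuinely embeds as a direct summand so that the restriction of $\pi$ stays injective. Part~(iii) is then formal. For the substantive part~(ii) you invoke Wedderburn and the double-centralizer theorem to identify $B\cong\bigoplus_\lambda\mathrm{End}(V_\lambda)$ and read off $Uc\cong V_{\lambda_0}$ directly. This is somewhat heavier than the more hands-on argument in Fulton--Harris, but it has the virtue of making the $B$-module structure of $Uc$ completely explicit rather than merely establishing irreducibility abstractly.

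One small caveat worth flagging: your conclusion in~(ii) that $Uc\cong V_{\lambda_0}$ is irreducible tacitly assumes $V_{\lambda_0}\neq 0$; if that isotypic component is absent then $Uc=0$, which is not irreducible in the usual convention. The paper and its reference share this looseness (``irreducible'' is used where ``irreducible or zero'' is meant, as is visible in the later tables where several $\InvL{\Hil_j^{\otimes N}}{\lambda}$ have dimension~$0$), so this is not a defect relative to the source.
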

The proof of this lemma can be found in \cite{FultonHarris}. We will not repeat it here. 

As we have shown in \sref{sc:monochromatic_intertwiners} the space of invariant tensors $U=\Inv{\Hil_j^{\otimes N}}$ is a right module over $A=\C{S_N}$. We have discussed above that $A=\C{S_N}$ has the following decomposition:
\be
A= \bigoplus_\lambda \bigoplus_{T_\lambda} A e_{T_\lambda},
\ee
where the direct sum is over all standard Young tableaux on the Young diagrams with $N$ boxes, $e_{T_\lambda}$ are idempotents from \eqref{eq:idempotents} constructed from Young symmetrizers. This decomposition leads to a decomposition of $U\otimes_A A$
\be
U\otimes_A A = \bigoplus_\lambda \bigoplus_{T_\lambda} U\otimes_A A e_{T_\lambda}.
\ee
By lemma \ref{lm:lemma_decomposition} (\ref{lm:UWirreducible}) each component in this decomposition $U\otimes_A A e_{T_\lambda}$ is an irreducible left $B$-module. By using the canonical isomorphism $\pi$ from \eqref{eq:canonical_isomorphism} we have
\be
U=\bigoplus_\lambda \bigoplus_{T_\lambda} U e_{T_\lambda}.
\ee
Each element $u\in U$ has the following decomposition 
\be
u=\sum_{\lambda} \sum_{T_\lambda} u_{T_\lambda},
\ee
where $u_{T_\lambda} = u e_{T_\lambda} $ is the component of $u$ in the irreducible left $B$-module $ U e_{T_\lambda}$. Let us summarize the construction by the following theorem.

\begin{theorem}\label{thm:HilbertSpaceDecomposition}
Let $\Inv{\Hil_j^{\otimes N}}$ be the space of tensors in $\Hil_j^{\otimes N}$ invariant under the action of the SU(2) group and $S_N$ be the group of permutations of $N$ elements.
\begin{enumerate}
\item The subspace $\Inv{\Hil_j^{\otimes N}}$ is invariant under the right action of the group $S_N$ on $\Hil_j^{\otimes N}$ defined by:
\be
(v_1 \otimes \ldots \otimes v_N) \cdot \sigma = v_{\sigma(1)}\otimes \ldots \otimes v_{\sigma(N)},\quad v_i\in \Hil_j.
\ee
\item Consider the commutator algebra:
\be
\fl B=\{b:\Inv{\Hil_j^{\otimes N}}\to \Inv{\Hil_j^{\otimes N}}; b(\iota \cdot \sigma) =b(\iota)\cdot \sigma,\ \forall \iota\in\Inv{\Hil_j^{\otimes N}}, \sigma\in S_N  \}.
\ee
Let $T_\lambda$ be a standard $\lambda$-tableau and $e_{T_\lambda}$ be the idempotent corresponding to the Young symmetrizer $c_{T_\lambda}$:
\be
e_{T_\lambda}=\frac{\dim V_\lambda}{ N!} c_{T_\lambda}.
\ee
Let us denote by $\InvLambda{\Hil_j^{\otimes N}}$ the image of $e_{T_\lambda}$:
\be
\InvLambda{\Hil_j^{\otimes N}} = \Inv{\Hil_j^{\otimes N}} e_{T_\lambda}.
\ee
The space $\Inv{\Hil_j^{\otimes N}}$ has the following decomposition into irreducible left $B$-modules:
\be
 \Inv{\Hil_j^{\otimes N}} = \bigoplus_\lambda \bigoplus_{T_\lambda} \InvLambda{\Hil_j^{\otimes N}},
\ee
where $\lambda$ runs through all Young diagrams with $N$ boxes, $T_\lambda$ runs through all the standard $\lambda$-tableaux.
\end{enumerate} 
\end{theorem}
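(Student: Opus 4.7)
\medskip
\noindent\textbf{Proof plan.} The two parts of the theorem correspond to two different kinds of work: Part (i) is the verification that the $S_N$-action really does preserve the subspace of SU(2)-invariant tensors, while Part (ii) is essentially a packaging of Lemma~\ref{lm:lemma_decomposition} for the specific choice $U=\Inv{\Hil_j^{\otimes N}}$ and $A=\C{S_N}$. Since most of the heavy lifting has already been done in \sref{sc:monochromatic_intertwiners} and \sref{sc:decomposition}, the plan is to make this structure explicit and point out precisely where each ingredient enters.

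For Part (i), I would repeat the computation already given in \sref{sc:monochromatic_intertwiners} in a slightly more formal way: the right action of $\sigma\in S_N$ on $\Hil_j^{\otimes N}$ and the left action of $\rho_j(u)^{\otimes N}$ commute, so applying $\rho_j(u)^{\otimes N}$ to $\iota\cdot\sigma$ and using SU(2)-invariance of $\iota$ returns $\iota\cdot\sigma$ unchanged. This shows the right action of $S_N$ descends to $\Inv{\Hil_j^{\otimes N}}$, making it a finite-dimensional right $A$-module, where $A=\C{S_N}$ is the group algebra of \sref{sc:group_algebra}.

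For Part (ii), the plan is to start from the decomposition \eqref{eq:CSd_splitting_equality} of the regular representation into minimal left ideals,
\be
A = \bigoplus_\lambda \bigoplus_{T_\lambda} A e_{T_\lambda},
\ee
and tensor it over $A$ with $U$. The universal property of $\otimes_A$ turns this into
\be
U\otimes_A A = \bigoplus_\lambda \bigoplus_{T_\lambda} U\otimes_A A e_{T_\lambda}.
\ee
Then I apply the canonical $B$-module isomorphism $\pi:U\otimes_A A\to U$ from \eqref{eq:canonical_isomorphism}, together with Lemma~\ref{lm:lemma_decomposition}(i), to identify each summand $U\otimes_A A e_{T_\lambda}$ with $U e_{T_\lambda}=\InvLambda{\Hil_j^{\otimes N}}$ as left $B$-modules. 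This gives the decomposition
\be
\Inv{\Hil_j^{\otimes N}} = \bigoplus_\lambda \bigoplus_{T_\lambda} \InvLambda{\Hil_j^{\otimes N}},
\ee
with $\lambda$ ranging over partitions of $N$ and $T_\lambda$ over standard $\lambda$-tableaux. To see that each summand is irreducible as a left $B$-module, I invoke Lemma~\ref{lm:lemma_decomposition}(ii): since $A e_{T_\lambda}$ is an irreducible left $A$-module (Theorem~1 of \sref{sc:permutation_group_representations}), the tensor product $U\otimes_A A e_{T_\lambda}$ is an irreducible left $B$-module, and the isomorphism $\pi$ transports this irreducibility to $\InvLambda{\Hil_j^{\otimes N}}$.

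I do not see a genuine obstacle in this argument; the only thing worth being careful about is the fact that some summands $\InvLambda{\Hil_j^{\otimes N}}$ may be zero, because $\Inv{\Hil_j^{\otimes N}}$ need not contain every irreducible $S_N$-type. This is harmless for the statement of the theorem (zero summands can be dropped without changing the identity), but it is worth flagging, because it is precisely this vanishing pattern that controls which $B$-blocks actually contribute and therefore determines the degeneracy structure of $V_{\vec j}$ discussed in \sref{sc:application_of_decomposition}.
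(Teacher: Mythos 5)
Your proposal is correct and follows essentially the same route as the paper: part (i) is the commuting-actions argument from \sref{sc:monochromatic_intertwiners}, and part (ii) tensors the decomposition $A=\bigoplus_\lambda\bigoplus_{T_\lambda}A e_{T_\lambda}$ with $U$ over $A$ and applies Lemma~\ref{lm:lemma_decomposition} together with the canonical isomorphism $\pi$ to identify each irreducible left $B$-module $U\otimes_A A e_{T_\lambda}$ with $U e_{T_\lambda}=\InvLambda{\Hil_j^{\otimes N}}$. Your remark that some summands $\InvLambda{\Hil_j^{\otimes N}}$ may vanish is a harmless and accurate observation, consistent with the paper's numerical tables.
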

\subsection{Consequences of the decomposition of the space of intertwiners for the eigenvalue problem}\label{sc:application_of_decomposition}
The theorem has the following consequences. Consider a hermitian operator 
\be 
C:\Inv{\Hil_j^{\otimes N}} \to \Inv{\Hil_j^{\otimes N}}
\ee such that
\be\label{eq:Cinvariance}
C(\iota \cdot \sigma) = C(\iota) \cdot \sigma \quad \forall \iota \in \Inv{\Hil_j^{\otimes N}}, \sigma\in S_N.
\ee
An example of such operator is the loop quantum gravity volume operator. Clearly $C$ is an element of the commutator algebra ${\rm Hom}_{S_N}(\Inv{\Hil_j^{\otimes N}},\Inv{\Hil_j^{\otimes N}})$. Let us recall the definition of $\InvLambda{\Hil_j^{\otimes N}}$:
\be
\InvLambda{\Hil_j^{\otimes N}}=\Inv{\Hil_j^{\otimes N}}  e_{T_\lambda}.
\ee
From theorem \ref{thm:HilbertSpaceDecomposition} it follows that $C$ (in the basis adapted to the decomposition) has block diagonal form. Let us denote by 
\be 
C_{T_\lambda}:\InvLambda{\Hil_j^{\otimes N}}\to \InvLambda{\Hil_j^{\otimes N}}
\ee
the corresponding blocks. The diagonalization problem of $C$ reduces to the problem of diagonalizing the operators $C_{T_\lambda}$, in particular if $\dim \InvLambda{\Hil_j^{\otimes N}}=1$ we obtain an eigenvector of $C_{T_\lambda}$. 

Let us notice that from the property \eqref{eq:Cinvariance} it follows that for a given $\lambda$ all operators $C_{T_\lambda}$ are unitarily equivalent. Indeed, consider an action of the permutation group $\sigma$ on $\lambda$-tableau $t^{\lambda}$. Let us denote by $t^{\lambda}_{ij}$ the number in the box in the $i$-th row and $j$-th column in $\lambda$-tableau $t^{\lambda}$. The $\lambda$-tableau $\sigma\cdot t^{\lambda}$ has a number $\sigma(t^{\lambda}_{ij})$ in the box in the $i$-th row and $j$-th column:
\be
(\sigma\cdot t^{\lambda})_{ij}=\sigma(t^{\lambda}_{ij}).
\ee
Permuting the numbers in the boxes induces group isomorphism $P(\sigma t^{\lambda})\cong P(t^{\lambda})$ and $Q(\sigma t^{\lambda})\cong Q(t^{\lambda})$:
\be
P(\sigma t^{\lambda}) = \sigma \cdot P(t^{\lambda}) \cdot \sigma^{-1},\quad Q(\sigma t^{\lambda}) = \sigma \cdot Q(t^{\lambda}) \cdot \sigma^{-1}.
\ee
As a result, the Young symmetrizer transforms as:
\be
c(\sigma \cdot t^\lambda)= \sigma \cdot c(t^{\lambda}) \cdot \sigma^{-1}.
\ee
It follows that the spaces $\InvLambda{\Hil_j^{\otimes N}}$ and $\InvL{\Hil_j^{\otimes N}}{\sigma \cdot T_\lambda}$ are unitarily equivalent. The equivalence is given by an operator $U_\sigma$:
\be
U_\sigma:\InvLambda{\Hil_j^{\otimes N}} \to \InvL{\Hil_j^{\otimes N}}{\sigma \cdot T_\lambda},\quad U_\sigma\, \iota = \iota\cdot \sigma^{-1}.
\ee
From the property \eqref{eq:Cinvariance} it follows that
\be
C_{\sigma\cdot T_\lambda} = U_\sigma C_{T_\lambda} U_\sigma^{-1}.
\ee
\section{Monochromatic spin $1/2$ intertwiners}\label{sc:spin_half}
In this section we will show that the volume operator is proportional to identity in the spaces of spin $1/2$ monochromatic intertwiners $\Inv{\Hil_{\frac{1}{2}}^{\otimes N}}$. We will calculate the proportionality factor. 

If the valence $N$ of the intertwiner is odd, the intertwiner space is $0$. If it is even, it will turn out to be the given by the $n$-th Catalan number (see \sref{sc:dim_inv}) which is equal to the number of standard $\lambda$-tableaux on a diagram $\lambda$ with two rows of equal length: $\lambda_0=(N/2,N/2)$. This means that dimension of each space $\InvL{\Hil_{\frac{1}{2}}^{\otimes N}}{T_{\lambda_0}}$ is not greater than $1$. In \sref{sc:decomposition_inv} we will show that it is non-trivial and therefore $\dim \InvL{\Hil_{\frac{1}{2}}^{\otimes N}}{T_{\lambda_0}}=1$. This means that the elements of $\InvL{\Hil_{\frac{1}{2}}^{\otimes N}}{T_{\lambda_0}}$ are eigenvectors of the volume operator and that the volume operator is proportional to identity. In \sref{sc:TrV} we will calculate the proportionality factor by calculating the trace of the operator.
\subsection{Dimension of the space of invariant tensors}\label{sc:dim_inv}

Let us notice that the operator $P:\Hil_{\frac{1}{2}}^{\otimes N} \to \Hil_{\frac{1}{2}}^{\otimes N}$:
\be
P=\int_{{\rm SU(2)}} d\mu(g) \underbrace{\rho_{\frac{1}{2}}(g)\otimes\rho_{\frac{1}{2}}(g)\otimes \ldots \otimes \rho_{\frac{1}{2}}(g)}_{N}
\ee
is a projector onto the subpace of invariant tensors $\Inv{\Hil_{\frac{1}{2}}^{\otimes N}}$. The dimension $D_N = \dim \Inv{\Hil_{\frac{1}{2}}^{\otimes N}}$of the space of invariants is therefore
\be
D_N=\int d\mu(g) \left( \Tr{\rho_{\frac{1}{2}}(g)}\right) ^N.
\ee
Let us notice, that $D_N$ is non-zero if $N$ is even: $N=2n, n\in \mathbb{N}$. This leads to the following formula
\be
\fl D_{2n}=\frac{4}{\pi} \int_{0}^{\frac{\pi}{2}} d\theta \sin ^2 \theta \left( \frac{\sin(2\theta)}{\sin \theta} \right)^{2n} =\frac{4^{n+1}}{\pi} \int_0^{\frac{\pi}{2}} d\theta (\sin\theta)^2 (\cos \theta)^{2n}=\frac{4^{n+1}}{2 \pi} B(n+\frac{1}{2}, \frac{3}{2}),
\ee
where $B(x,y)$ is the beta function. Since $n$ is an integer, this expression simplifies to 
\be
D_{2n}=\frac{1}{n+1} \binom{2n}{n}.
\ee
This is precisely the $n$-th Catalan number.
\subsection{Decomposition of the space of invariant tensors}\label{sc:decomposition_inv}
In the following we will assume that $N=2n,\, n\in \mathbb{N}$. Let us calculate the dimension of the representation space $V_\lambda$ of the permutation group $S_N$ corresponding to Young diagram $\lambda=(n, n)$. For example if $N=20$, then $\lambda=(10,10)$ and the corresponding Young diagram is:
\be
\ydiagram{10, 10}.
\ee
The dimension of the space $V_\lambda$ can be conveniently calculated using the hook length formula:
\be
\dim V_\lambda= \frac{(2n)!}{(n+1)! n!}=\frac{1}{n+1} \binom{2n}{n}.
\ee
For example in the following diagram each box is labelled by its hook length:
\be
\begin{ytableau}
11 & 10 & 9 & 8 &7&6 &5&4&3&2 \\
10 & 9 & 8 &7&6 &5&4&3&2&1
\end{ytableau}
\ee
Therefore for this diagram
\be
\dim V_\lambda= \frac{20!}{11! 10!}.
\ee
We will use this observation to prove the following lemma:
\begin{lemma}
For $N$ odd the space $\Inv{\Hil_{\frac{1}{2}}^{\otimes N}}$ is trivial and for $N$ even it has the following decomposition into irreducible left $B$-modules:
\be
\Inv{\Hil_{\frac{1}{2}}^{\otimes N}}=\bigoplus_{T_\lambda} \InvLambda{\Hil_{\frac{1}{2}}^{\otimes N}},
\ee
where $\lambda=(N/2,N/2)$ and the direct sum is over all standard $\lambda$-tableaux. Each space $\InvLambda{\Hil_{\frac{1}{2}}^{\otimes N}}$ in this decomposition is 1-dimensional:
\be
\dim \InvLambda{\Hil_{\frac{1}{2}}^{\otimes N}} = 1.
\ee
\end{lemma}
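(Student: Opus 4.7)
The case $N$ odd is immediate: the Clebsch--Gordan decomposition of $\Hil_{\frac{1}{2}}^{\otimes N}$ contains only half-integer spin components, so $\Inv{\Hil_{\frac{1}{2}}^{\otimes N}}=\{0\}$ and the lemma holds trivially. For $N = 2n$, I would apply Theorem \ref{thm:HilbertSpaceDecomposition} to write
\[
\Inv{\Hil_{\frac{1}{2}}^{\otimes N}} = \bigoplus_\lambda \bigoplus_{T_\lambda} \InvLambda{\Hil_{\frac{1}{2}}^{\otimes N}},\qquad \InvLambda{\Hil_{\frac{1}{2}}^{\otimes N}} = \Inv{\Hil_{\frac{1}{2}}^{\otimes N}}\,e_{T_\lambda},
\]
and then show that only $\lambda = \lambda_0 = (n,n)$ contributes, with each such summand one-dimensional.

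The key observation is that the left SU(2) action on $\Hil_{\frac{1}{2}}^{\otimes N}$ commutes with the right $\C{S_N}$ action; consequently $e_{T_\lambda}$ restricts to an idempotent on $\Inv{\Hil_{\frac{1}{2}}^{\otimes N}}$ and
\[
\InvLambda{\Hil_{\frac{1}{2}}^{\otimes N}} = \Inv{\Hil_{\frac{1}{2}}^{\otimes N}\,e_{T_\lambda}}.
\]
By Weyl's construction (Schur--Weyl duality, \cite{FultonHarris} Theorem~6.3), $\Hil_{\frac{1}{2}}^{\otimes N}\,e_{T_\lambda}$ is isomorphic as a $\GL{2}$-module to the Schur functor $\mathbb{S}_\lambda(\mathbb{C}^2)$. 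This vanishes whenever $\lambda$ has three or more rows, because the column antisymmetrizer in $c_{T_\lambda}$ then antisymmetrizes more than two spin-$1/2$ indices in a two-dimensional factor. For a two-row partition $\lambda=(\lambda_1,\lambda_2)$, $\mathbb{S}_\lambda(\mathbb{C}^2)$ carries the irreducible SU(2) representation of spin $j=(\lambda_1-\lambda_2)/2$, whose invariant subspace is non-trivial precisely when $\lambda_1=\lambda_2=n$ and is then one-dimensional.

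Combining these facts yields $\dim \InvLambda{\Hil_{\frac{1}{2}}^{\otimes N}} = 1$ for $\lambda=(n,n)$ and $0$ for every other partition, which is the content of the lemma. As a consistency check, the number of standard $(n,n)$-tableaux equals $\dim V_{(n,n)} = C_n$ by the hook-length formula, and this matches $\dim \Inv{\Hil_{\frac{1}{2}}^{\otimes N}} = C_n$ from \sref{sc:dim_inv}, so Theorem \ref{thm:HilbertSpaceDecomposition} with the surviving summands already accounts for the entire invariant space. The main technical input is the Schur--Weyl identification $\Hil_{\frac{1}{2}}^{\otimes N}\,e_{T_\lambda}\cong \mathbb{S}_\lambda(\mathbb{C}^2)$: eliminating three-or-more-row shapes is elementary, but excluding two-row shapes with $\lambda_1>\lambda_2$ requires either the full Weyl construction or a direct highest-weight argument showing that no SU(2) singlet can survive in the image of the Young symmetrizer.
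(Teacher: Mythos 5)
Your proposal is correct, but it proves the lemma by a genuinely different route than the paper. The paper's own argument is a dimension count plus an explicit construction: it first computes $\dim \Inv{\Hil_{\frac{1}{2}}^{\otimes N}}=\frac{1}{n+1}\binom{2n}{n}$ by the group integral and observes that this Catalan number equals, by the hook length formula, the number of standard $(n,n)$-tableaux, so each block $\InvLambda{\Hil_{\frac{1}{2}}^{\otimes N}}$ with $\lambda=(n,n)$ can be at most one-dimensional; it then exhibits a nonzero element of the canonical block, namely $\iota=\tilde{\iota}\cdot c_\lambda$ with $\tilde{\iota}^{A_1\ldots A_{2n}}=\epsilon^{A_1 A_{n+1}}\cdots\epsilon^{A_n A_{2n}}$, verified nonzero by computing the component $\iota^{-\frac{1}{2}\ldots-\frac{1}{2}\frac{1}{2}\ldots\frac{1}{2}}=(-1)^n$; unitary equivalence of the blocks for different standard tableaux and the dimension count then force every other Young diagram to contribute zero. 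You instead use the commutation of the projector onto invariants with the right $\C{S_N}$-action to write $\InvLambda{\Hil_{\frac{1}{2}}^{\otimes N}}=\Inv{\Hil_{\frac{1}{2}}^{\otimes N} e_{T_\lambda}}$ and invoke Weyl's construction ($\Hil_{\frac{1}{2}}^{\otimes N} e_{T_\lambda}\cong \mathbb{S}_\lambda(\mathbb{C}^2)$, Theorem 6.3 of \cite{FultonHarris}), so that shapes with three or more rows die by antisymmetrization, a two-row shape $(\lambda_1,\lambda_2)$ gives the spin $(\lambda_1-\lambda_2)/2$ irreducible of SU(2), and only the rectangular diagram $(n,n)$ yields a (one-dimensional) singlet; the Catalan count becomes a consistency check rather than an input. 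Your approach is more structural: it explains directly \emph{why} only $\lambda=(n,n)$ survives, with no explicit tensor computation, at the cost of importing the full Schur--Weyl/Weyl-module identification on top of the Lemma \ref{lm:lemma_decomposition} machinery the paper already uses. The paper's approach is more elementary and self-contained given what has been established, and it has the side benefit of producing an explicit spanning vector of the canonical block (the antisymmetrized product of $\epsilon$'s), which the paper later identifies with the highest tree-basis element. Both arguments are sound, and your identification of the exclusion of two-row shapes with $\lambda_1>\lambda_2$ as the step requiring the Weyl construction (or a highest-weight argument) is accurate.
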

\begin{proof}
Since $\dim  \Inv{\Hil_{\frac{1}{2}}^{\otimes N}} = \dim V_\lambda$ is the same as the number of standard $\lambda$-tableaux, it follows that
\be
\dim \InvLambda{\Hil_{\frac{1}{2}}^{\otimes N}} \leq 1, \ \rm{for}\ \lambda=(N/2,N/2).
\ee
In fact it is non-trivial. Consider an invariant antisymmetric tensor $\epsilon \in \Inv{\Hil_{\frac{1}{2}}\otimes \Hil_{\frac{1}{2}}}$:
\be
\tensor{\rho(u)}{^{A_1}_{B_1}}  \tensor{\rho(u)}{^{A_2}_{B_2}}\, \epsilon^{B_1 B_2}=\epsilon^{A_1 A_2}.
\ee
It is defined uniquely up to a scale factor. Let us for simplicity choose the scale factor in such a way that $\epsilon^{-\frac{1}{2} \frac{1}{2}}=-1$. A non-trivial element is:
\be 
\iota^{A_1 \ldots A_n A_{n+1}\ldots A_{2n}}=\epsilon^{(A_1 |A_{n+1}|} \epsilon^{A_2 |A_{n+2}|} \ldots \epsilon^{A_{n}) A_{2n}},
\ee
where the brackets $()$ denote symmetrization and the vertical lines distinguish the fixed indices (those not used in the symmetrization). In other words, take 
\be
\tilde{\iota}^{A_1 \ldots A_n A_{n+1}\ldots A_{2n}} = \epsilon^{A_1 A_{n+1}}\ldots \epsilon^{A_n A_{2n}}.
\ee The invariant tensor $\iota$ is obtained from $\tilde{\iota}$ by acting with the Young symmetrizer corresponding to the canonical Young tableau $T_\lambda=t^{\lambda}_{\rm can}$, where $\lambda=(N/2,N/2)$: 
\be 
\iota=\tilde{\iota}\cdot c_\lambda.
\ee Clearly $\iota$ is symmetric in the first $N/2$ and the last $N/2$ indices, while it is antisymmetric under transposition of indices $A_{i}$ and $A_{i+n}$. It is also invariant because $\epsilon$ is invariant. It is non-trivial, because:
\be
\iota^{-\frac{1}{2} \ldots -\frac{1}{2} \frac{1}{2}\ldots \frac{1}{2}}=(-1)^{n}.
\ee
This shows that 
\be
\dim \InvL{\Hil_{\frac{1}{2}}^{\otimes N}}{t^{\lambda}_{\rm can}} = 1\ \rm{for}\ \lambda=(N/2,N/2).
\ee
Since all the spaces $\InvLambda{\Hil_{\frac{1}{2}}^{\otimes N}}$ corresponding to the same Young diagram $\lambda$ are isomorphic to $\InvL{\Hil_{\frac{1}{2}}^{\otimes N}}{t^{\lambda}_{\rm can}}$, it follows that:
\be
\Inv{\Hil_{\frac{1}{2}}^{\otimes N}}=\bigoplus_{T_\lambda} \InvLambda{\Hil_{\frac{1}{2}}^{\otimes N}},
\ee
where $\lambda=(N/2,N/2)$ and the direct sum is over all standard $\lambda$-tableaux.
\end{proof}
It follows that the volume operator $V$ (and any element of the commutator algebra $B$) is diagonal in the basis adapted to the decomposition. From the discussion in \sref{sc:application_of_decomposition} it follows that the blocks $V_{T_\lambda}$ are isospectral. As a result, all the diagonal entries are equal. The volume operator restricted to the space $\Inv{\Hil_{\frac{1}{2}}^{\otimes N}}$ is proportional to identity. We will find the overall factor by calculating the trace of the operator.
\subsection{The trace of the volume operator}\label{sc:TrV}
The calculation of the trace of the volume operator is based on the observation that the operators $q_{IJK}:\Inv{\Hil_j^{\otimes N}}\to \Inv{\Hil_j^{\otimes N}}$,
\be
q_{IJK}=|\epsilon_{ijk} J^i_{I} J^j_{J} J^k_{K}|,\quad I\neq J,\, I\neq K,\, J\neq K
\ee
are isospectral:
\be\label{eq:UqU}
q_{\sigma(I) \sigma(J) \sigma(K)} = U_\sigma q_{IJK} U_{\sigma}^{-1},
\ee
where $U_\sigma$ is the unitary operator defined by the right action of the permutation group on $\Inv{\Hil_j^{\otimes N}}$:
\be
U_\sigma \iota = \iota \cdot \sigma^{-1}.
\ee
As a result, the problem of calculating the trace of the operator $q_{\vec{j}}$ reduces to the problem of calculating the trace of $q_{123}$:
\be\label{eq:Trqj}
\Tr{q_{\vec{j}}}=\frac{1}{8}\binom{N}{3} \Tr{q_{123}}.
\ee

In order to calculate $\Tr{q_{123}}$ in the space $\Inv{\Hil_{\frac{1}{2}}^{\otimes N}}, N=2n, n\in \mathbb{N}$, we notice that there is an isomorphism
\be
\mathcal{E}:\bigoplus_{k=\frac{1}{2}}^{\frac{3}{2}}\left(\Inv{\Hil_{\frac{1}{2}}^{\otimes 3} \otimes \Hil_k} \otimes \Inv{\Hil_k\otimes \Hil_{\frac{1}{2}}^{\otimes (N-3)}} \right)\to \Inv{\Hil_{\frac{1}{2}}^{\otimes N}},
\ee
defined by 
\be\label{eq:Eiso}
(\mathcal{E}(v\otimes w))^{A_1 \ldots A_N}=v^{A_1 A_2 A_3 B_1} \epsilon_{B_1 B_2} w^{B_2 A_4 \ldots A_N},
\ee
for all $ v\in  \Inv{\Hil_{\frac{1}{2}}^{\otimes 3} \otimes \Hil_k},\, w\in \Inv{\Hil_k\otimes \Hil_{\frac{1}{2}}^{\otimes (N-3)}}, k\in\{\frac{1}{2},\frac{3}{2}\}$. This isomorphism commutes with the operator $q_{123}$:
\be\label{eq:qEEq}
q_{123} \mathcal{E}(v\otimes w)=\mathcal{E}((q_{123} v)\otimes w).
\ee
In order to calculate $q_{123}$ it is enough to know its action in the space $\Inv{\Hil_{\frac{1}{2}}^{\otimes 3} \otimes \Hil_k}$ but this has been studied in \cite{BrunnemannI,BrunnemannIII,BrunnemannIV}. In particular, from formulas (7.3) and (7.4) from \cite{BrunnemannIV} it is straightforward to calculate that 
\be
q_{123} v = 0 \quad \forall v\in \Inv{\Hil_{\frac{1}{2}}^{\otimes 3} \otimes \Hil_{\frac{3}{2}}}.
\ee
The matrix of the operator $q_{123}$ in $\Inv{\Hil_{\frac{1}{2}}^{\otimes 4}}$ is a $2\times 2$ matrix, which turns out to be diagonal:
\be
q_{123} v = \frac{\sqrt{3}}{4} v \quad \forall v\in \Inv{\Hil_{\frac{1}{2}}^{\otimes 3} \otimes \Hil_{\frac{1}{2}}}.
\ee
The trace of the operator $q_{123}$ is therefore
\be
\Tr{q_{123}}=\frac{\sqrt{3}}{2}\cdot \dim \Inv{\Hil_{\frac{1}{2}}^{\otimes(N-2)}}=\frac{\sqrt{3}}{2} \frac{1}{n} \binom{2n-2}{n-1}.
\ee
From equation \eqref{eq:Trqj} it follows that
\be
\Tr{q_{\vec{j}}}=\frac{\sqrt{3}}{16}\frac{1}{n}  \binom{2n}{3}\binom{2n-2}{n-1}.
\ee
In order to calculate the factor $\lambda$, 
\be 
q_{\vec{j}}=\lambda \id
\ee we need to divide the trace by the dimension of the space $\Inv{\Hil_{\frac{1}{2}}^{\otimes(N)}}$:
\be
\lambda = \frac{\Tr{q_{\vec{j}}}}{\dim  \Inv{\Hil_{\frac{1}{2}}^{\otimes(N)}} }.
\ee
We obtain:
\be
\lambda = \frac{\sqrt{3}}{64\cdot 3!} (N-2)N(N+2).
\ee
Therefore for $\vec{j}=(\frac{1}{2},\ldots, \frac{1}{2})$ the volume operator is:
\be
V_{\vec{j}}=\frac{\kappa_0}{8} \left(\frac{8\pi G \hbar \gamma}{c^3}\right)^{\frac{3}{2}} \sqrt{\frac{\sqrt{3}}{3!} (N-2)N(N+2)}\cdot \id.
\ee
Let us notice that the total spin is $N\cdot \frac{1}{2} = \frac{N}{2}$. In the limit of large total spin the expected asymptotics \cite{MajorSeifert} is recovered:
\be
|| V_{\vec{j}} || \approx C (\textrm{total spin})^{\frac{3}{2}}.
\ee
\section{Higher spin monochromatic intertwiners}
In the case of spin $1/2$ monochromatic intertwiners we gave a full characterization of the volume operator. This was possible because the space of intertwiners turned out to be an irreducible space of the action of the permutation group. As a result, the volume operator is proportional to identity. In addition to this, the operator $q_{123}$ takes a simple form, which allowed us to calculate the overall factor. In general the space of intertwiners splits into subspaces corresponding to more than one Young diagram and an irreducible $B$-module can be more than $1$-dimensional. For the lowest spins and valences of the intertwiners our analysis predicts well the pattern of the degeneracy of the eigenvalues. In this section we will present some results of our numerical calculation of the dimensions of the irreducible spaces and the corresponding eigenvalues.

In our numerical calculation we use the tree basis of the intertwiner space $\Inv{\Hil_{j_1}\otimes \ldots \otimes \Hil_{j_N}}$. We recall the definition in \sref{sc:tree_basis} (see for example \cite{MakinenRecoupling} for detailed presentation). In \sref{sc:representation_matrices} we calculate the matrices of the representation of the permutation group $\sigma \mapsto U_\sigma,\, U_\sigma \iota = \iota \cdot \sigma^{-1} $. This allows us to compute the dimensions of the spaces $\InvL{\Hil_j^{\otimes N}}{\lambda}=\InvL{\Hil_j^{\otimes N}}{t^\lambda_{\rm can}}$ (see \sref{sc:dim_irr}). We considered $N$ in the range $2,\ldots,6$ and $j$ in the range $\frac{1}{2},1,\ldots,7$. The results are presented in table \ref{tab:dimensions}. Knowing $\dim \InvL{\Hil_j^{\otimes N}}{\lambda}$, we can predict to some extend the degeneracy of the eigenstates of the volume operator. In \sref{sc:numerical_volume} we discuss the methods we used to calculate the volume operator. In table \ref{tab:volume} we presented the result of our numerical calculation of the eigenvalues of the volume operator. In \sref{sc:deg_eigen} we compare the degeneracy of the eigenstates for the volume matrices, summarized in table \ref{tab:volume}, with the the dimensions of the spaces $\InvL{\Hil_j^{\otimes N}}{\lambda}$, summarized in table \ref{tab:dimensions}. 
\subsection{Tree basis}\label{sc:tree_basis}
A tree basis is built from invariant tensors
\be 
C_{j_1 j_2 j_3} \in \Inv{\Hil_{j_1}\otimes \Hil_{j_2} \otimes \Hil_{j_3}} \quad {\rm and}\quad \epsilon^{j_4} \in \Inv{\Hil^*_{j_4}\otimes \Hil^*_{j_4}}.
\ee
The spaces $\Inv{\Hil_{j_1}\otimes \Hil_{j_2} \otimes \Hil_{j_3}}$ and $\Inv{\Hil^*_{j_4}\otimes \Hil^*_{j_5}}$ are 1-dimensional. Therefore the tensors $C_{j_1 j_2 j_3}$ and $\epsilon^{j_4 j_5}$ are defined uniquely up to scale and phase factors. A common choice is:
\be
C_{j_1 j_2 j_3}^{A_1 A_2 A_3} = \tjsymbol{j_1}{j_2}{j_3}{A_1}{A_2}{A_3},\quad \epsilon^{j}_{A_1 A_2}=(-1)^{j-A_1}\delta_{A_1,-A_2}.
\ee
where $\tjsymbol{j_1}{j_2}{j_3}{A_1}{A_2}{A_3}$ is the 3j-symbol. When the values of the spins $j_1,j_2,j_3,j$ are clear from the context we will suppress them in the notation and write $C^{A_1 A_2 A_3}$ instead of $C_{j_1 j_2 j_3}^{A_1 A_2 A_3}$, $\epsilon_{A_1 A_2}$ instead of $\epsilon^{j}_{A_1 A_2}$.  A tree basis in $\Inv{\Hil_{j_1}\otimes \ldots \otimes \Hil_{j_N}}$ is labelled by a sequence of spins $\vec{k}=(k_1,k_2,\ldots, k_{N-3} )$ satisfying:
\be 
|j_{I} - k_{I-2}|\leq k_{I-1}\leq j_{I} + k_{I-2},\quad k_{I-1}+k_{I-2}+j_I\in \mathbb{N},
\ee
where $I\in \{2,\ldots, N-1\}$, $k_{0}=j_1$, $k_{N-2}=j_N$. The basis element corresponding to the sequence $\vec{k}$ is
\be
\fl \tilde{\iota}_{\vec{k}}^{A_1 \ldots A_N}=C^{A_1 A_2 B_1} \epsilon_{B_1 B'_1} C^{B'_1 A_3 B_2} \epsilon_{B_2 B'_2}\ldots\epsilon_{B_{I-2} B'_{I-2}} C^{B'_{I-2} A_I B_{I-1}} \epsilon_{B_{I-1} B'_{I-1}}\ldots C^{B_{N-3} A_{N-1} A_N},
\ee
where the indices $A_1,\ldots, A_N$ correspond to Hilbert spaces $\Hil_{j_1},\ldots, \Hil_{j_N}$ and the indices $B_1,\ldots, B_{N-3}$ correspond to the Hilbert spaces $\Hil_{k_1},\ldots ,\Hil_{k_{N-3}}$. The basis $\tilde{\iota}_{\vec{k}}$ is orthogonal but not orthonormal. We will use an orthonormal basis
\be
\iota_{\vec{k}} = \prod_{I=1}^{N-2} \sqrt{2k_I +1}\ \tilde{\iota}_{\vec{k}}.
\ee
We put the basis in the co-lexicographic order, i.e. we will say that $\vec{k}< \vec{k'}$ if $k_I<k'_I$ for the last $I$ where $k_I$ and $k'_I$ differ. For example:
\be
(0,\frac{1}{2},0)< (1,\frac{1}{2},0)< (0,\frac{1}{2},1) < (1,\frac{1}{2},1) < (1,\frac{3}{2},1)
\ee
is the ordering of basis in the space $\Inv{\Hil_{\frac{1}{2}}^{\otimes 6}}$. As a side remark, let us notice that the highest element spans the space $\InvLambda{\Hil_{\frac{1}{2}}^{\otimes N}}$, where $T_\lambda=t^\lambda_{\rm can}$.
\subsection{Matrix elements of the representation of the permutation group}\label{sc:representation_matrices}
Having fixed the basis, we can calculate the matrix elements of the representation of the permutation group 
\be
S_N\ni\sigma \mapsto U_\sigma \in \GL{\Inv{\Hil_j^{\otimes N}}},\quad U_\sigma \iota = \iota \cdot \sigma^{-1}.
\ee

Let us focus first on the case of 4-valent intertwiners, i.e. $N=4$. From equation (2.41) from \cite{MakinenRecoupling} it follows that
\be
U_{(23)}\tilde{\iota}_l = \sum_k (2k+1) (-1)^{2j+k+l} \sjsymbol{j}{j}{k}{j}{j}{l} \tilde{\iota}_k,
\ee
where $\sjsymbol{j_1}{j_2}{j_3}{j_4}{j_5}{j_6}$ is the 6j-symbol. Using this formula we can calculate the matrix elements of $U_{(23)}$ in the orthonormal basis $\iota_k$:
\be
U_{(23)}\iota_l =\sum_k {U_{(23)}}\indices{^k_l}\, \iota_k,\quad {U_{(23)}}\indices{^k_l}=\sqrt{(2k+1)(2l+1)} (-1)^{2j+k+l} \sjsymbol{j}{j}{k}{j}{j}{l}.
\ee
The matrix elements $U_{(12)}$ and $U_{(34)}$ can be calculated using the property of the $3j$-symbol that odd permutations of its columns produce a phase factor $(-1)^{j_1+ j_2+ j_3}$:
\be
\tjsymbol{j_1}{j_2}{j_3}{A_1}{A_2}{A_3} =(-1)^{j_1 +j_2 +j_3}\tjsymbol{j_2}{j_1}{j_3}{A_2}{A_1}{A_3}.
\ee
As a result:
\be
{U_{(12)}}\indices{^k_l}={U_{(34)}}\indices{^k_l} = (-1)^{2j+ k} \delta\indices{^k_l},
\ee
where $\delta\indices{^k_l}$ is the Kronecker delta. Since any permutation is a product of the adjacent transpositions, we can calculate all matrices $U_\sigma,$ where $\sigma\in S_4$.

The calculation in the case of 4-valent intertwiners generalizes straightforwardly to $N$-valent intertwiners. The matrix elements of matrices corresponding to adjacent transpositions $U_{(I\, I+1)}$ are the following.
\begin{enumerate}
\item For $I=1$,
\be\label{eq:adj1}
{U_{(1\,2)}}\indices{^{\vec{k}}_{\vec{l}}} = (-1)^{2j+k_1}\prod_{i=1}^{N-3}\delta\indices{^{k_i}_{l_{i}}}.
\ee
\item For any $I\in \{2,3,\ldots, N-2 \}$ the matrices $U_{(I\,I+1)}$ are expressed in terms of the $6j$-symbol:
\be\label{eq:adj2}
\fl {U_{(I\,I+1)}}\indices{^{\vec{k}}_{\vec{l}}} =\sqrt{(2 k_{I-1}+1)(2 l_{I-1}+1)} (-1)^{2 j + k_{I-1} + l_{I-1}} \sjsymbol{k_{I-2}}{j}{k_{I-1}}{k_I}{j}{l_{I-1}} \prod_{i\neq I-1} \delta\indices{^{k_i}_{l_{i}}},
\ee
where we use the convention that $k_0=j_1, k_{N-2}=j_N$. 
\item For $I=N-1$,
\be\label{eq:adj3}
{U_{(N-1\,N)}}\indices{^{\vec{k}}_{\vec{l}}} = (-1)^{2j+k_{N-3}}\prod_{i=1}^{N-3}\delta\indices{^{k_i}_{l_{i}}}.
\ee
\end{enumerate} 
Again, since any permutation can be expressed as a product of adjacent transpositions, any matrix $U_\sigma, \sigma\in S_N$ is a product of the matrices given above.
\subsection{Dimensions of the irreducible spaces}\label{sc:dim_irr}
According to lemma \ref{lm:lemma_decomposition} (\ref{lm:U_splitting}) the Hilbert space $\Inv{\Hil_j^{\otimes N}}$ has the following decomposition:
\be\label{eq:decomposition_Inv}
\Inv{\Hil_j^{\otimes N}}\cong\bigoplus_{\lambda} \InvL{\Hil_j^{\otimes N}}{\lambda}^{\oplus m_\lambda},
\ee
where $\lambda$ runs through all Young diagrams with $N$ boxes and 
\be
\InvL{\Hil_j^{\otimes N}}{\lambda}=\InvL{\Hil_j^{\otimes N}}{t^\lambda_{\rm can}}.
\ee
The number of times $\InvL{\Hil_j^{\otimes N}}{\lambda}$ appears in the decomposition is denoted by $m_\lambda$. It is equal to the dimension of the irreducible representation of $S_N$ corresponding to $\lambda$ and can be calculated using the hook length formula \eqref{eq:dimVlambda}. The dimensions of the spaces $\InvL{\Hil_j^{\otimes N}}{\lambda}$ are equal to the trace of the idempotent $e_\lambda$ obtained by rescaling the Young symmetrizer $c_\lambda$ (see theorem \ref{thm:HilbertSpaceDecomposition}). Specifically, to the groups $P_\lambda$ and $Q_\lambda$ from section \ref{sc:permutation_group_representations} correspond two operators in $\Inv{\Hil_j^{\otimes N}}$:
\be
A_\lambda = \sum_{\sigma \in P_\lambda} U_\sigma, \quad B_\lambda = \sum_{\sigma \in Q_\lambda} \sgn(\sigma) U_\sigma.
\ee
The space $\InvL{\Hil_j^{\otimes N}}{\lambda}$ is the image of the operator $\mathbb{P}_\lambda$ acting in $\Inv{\Hil_j^{\otimes N}}$:
\be
\mathbb{P}_\lambda=\frac{1}{\prod \textrm{(Hook lengths)}}\, B_\lambda\, A_\lambda.
\ee
As a result the dimension of the space $\InvL{\Hil_j^{\otimes N}}{\lambda}$ can be expressed in terms of the trace of $\mathbb{P}_\lambda$:
\be
\dim \InvL{\Hil_j^{\otimes N}}{\lambda} = \frac{1}{\prod \textrm{(Hook lengths)}} \Tr{B_\lambda\, A_\lambda}.
\ee

We calculated the dimensions of the spaces $\dim \InvL{\Hil_j^{\otimes N}}{\lambda} $ for $N=2,3,4,5,6$, $j=\frac{1}{2},\ldots, 7$ together with their multiplicities $m_\lambda$. The matrices $U_\sigma$ are calculated by multiplying the adjacent transposition matrices \eqref{eq:adj1}, \eqref{eq:adj2}, \eqref{eq:adj3}. We calculate them in double accuracy using the library WIGXJPF for the $6j$-symbol \cite{SixJLibrary}. From equations \eqref{eq:adj1}, \eqref{eq:adj2}, \eqref{eq:adj3} it is clear that the matrices are sparse.  We use the Intel\textsuperscript \textregistered  MKL 2019.0 library to multiply and add the sparse matrices. The results are gathered in table \ref{tab:dimensions}. In the table we included only the Young diagrams $\lambda$ such that $\dim\InvL{\Hil_j^{\otimes N}}{\lambda}\neq 0$ for at least one pair $(N,j),\,N\in\{2,3,4,5,6\},\,j\in\{\frac{1}{2},1,\ldots,7\}$. In our numerical calculations the dimensions are double accuracy numbers but are only $\epsilon\approx 10^{-16}$ away from integers. They are rounded to an integer. We will discuss the result in the next subsection. 
\begin{table}[hbt!]
\begin{tabular}{|c|c|c||*{14}{c|}}
\hline
$N$ & $\lambda$ &\backslashbox{$m_\lambda$}{$j$}&$\frac{1}{2}$ & $1$ & $\frac{3}{2}$ & $2$ & $\frac{5}{2}$ & $3$ & $\frac{7}{2}$ & $4$ & $\frac{9}{2}$ & $5$ & $\frac{11}{2}$ & $6$ & $\frac{13}{2}$ & $7$ \\ 
\hline
\multirow{2}{*}{2}&(2) & 1 & 0 & 1 & 0 & 1 & 0 & 1 & 0 & 1 & 0 & 1 & 0 & 1 & 0 & 1 \\  
\cline{2-17}
&(1,1) & 1 & 1 & 0 & 1 & 0 & 1 & 0 & 1 & 0 & 1 & 0 & 1 & 0 & 1 & 0 \\  
\hline
\multirow{2}{*}{3}&(3) & 1 & 0 & 0 & 0 & 1 & 0 & 0 & 0 & 1 & 0 & 0 & 0 & 1 & 0 & 0 \\  
\cline{2-17}
&(1,1,1) & 1 & 0 & 1 & 0 & 0 & 0 & 1 & 0 & 0 & 0 & 1 & 0 & 0 & 0 & 1 \\  
\hline
\multirow{3}{*}{4}&(4) & 1 & 0 & 1 & 1 & 1 & 1 & 2 & 1 & 2 & 2 & 2 & 2 & 3 & 2 & 3 \\  
\cline{2-17}
&(2,2) & 2 & 1 & 1 & 1 & 2 & 2 & 2 & 3 & 3 & 3 & 4 & 4 & 4 & 5 & 5 \\  
\cline{2-17}
&(1,1,1,1) & 1 & 0 & 0 & 1 & 0 & 1 & 1 & 1 & 1 & 2 & 1 & 2 & 2 & 2 & 2 \\  
\hline
\multirow{7}{*}{5}&(5) & 1 & 0 & 0 & 0 & 1 & 0 & 0 & 0 & 2 & 0 & 0 & 0 & 3 & 0 & 0 \\  
\cline{2-17}
&(4,1) & 4 & 0 & 0 & 0 & 1 & 0 & 1 & 0 & 2 & 0 & 3 & 0 & 4 & 0 & 5 \\  
\cline{2-17}
&(3,2) & 5 & 0 & 0 & 0 & 1 & 0 & 1 & 0 & 3 & 0 & 2 & 0 & 6 & 0 & 5 \\  
\cline{2-17}
&(3,1,1) & 6 & 0 & 1 & 0 & 0 & 0 & 3 & 0 & 1 & 0 & 6 & 0 & 3 & 0 & 10 \\  
\cline{2-17}
&(2,2,1) & 5 & 0 & 0 & 0 & 1 & 0 & 0 & 0 & 3 & 0 & 2 & 0 & 5 & 0 & 4 \\  
\cline{2-17}
&(2,1,1,1) & 4 & 0 & 0 & 0 & 0 & 0 & 1 & 0 & 1 & 0 & 2 & 0 & 3 & 0 & 4 \\  
\cline{2-17}
&(1,1,1,1,1) & 1 & 0 & 0 & 0 & 1 & 0 & 0 & 0 & 1 & 0 & 0 & 0 & 2 & 0 & 0 \\  
\hline
\multirow{11}{*}{6}&(6) & 1 & 0 & 1 & 0 & 2 & 0 & 3 & 0 & 4 & 0 & 6 & 0 & 8 & 0 & 10 \\  
\cline{2-17}
&(5,1) & 5 & 0 & 0 & 1 & 0 & 2 & 1 & 4 & 2 & 7 & 4 & 11 & 7 & 16 & 11 \\  
\cline{2-17}
&(4,2) & 9 & 0 & 1 & 0 & 3 & 0 & 6 & 1 & 11 & 3 & 17 & 6 & 26 & 11 & 37 \\  
\cline{2-17}
&(4,1,1) & 10 & 0 & 0 & 1 & 0 & 3 & 1 & 6 & 3 & 11 & 6 & 18 & 11 & 27 & 18 \\  
\cline{2-17}
&(3,3) & 5 & 1 & 0 & 2 & 0 & 4 & 0 & 7 & 0 & 11 & 1 & 16 & 2 & 23 & 4 \\  
\cline{2-17}
&(3,2,1) & 16 & 0 & 0 & 0 & 1 & 2 & 3 & 5 & 7 & 10 & 14 & 18 & 23 & 29 & 36 \\  
\cline{2-17}
&(3,1,1,1) & 10 & 0 & 0 & 0 & 1 & 0 & 3 & 1 & 6 & 3 & 11 & 6 & 18 & 11 & 27 \\  
\cline{2-17}
&(2,2,2) & 5 & 0 & 1 & 0 & 2 & 0 & 4 & 0 & 7 & 0 & 11 & 1 & 16 & 2 & 23 \\  
\cline{2-17}
&(2,2,1,1) & 9 & 0 & 0 & 1 & 0 & 2 & 0 & 5 & 1 & 9 & 2 & 15 & 5 & 23 & 9 \\  
\cline{2-17}
&(2,1,1,1,1) & 5 & 0 & 0 & 0 & 0 & 0 & 1 & 0 & 2 & 1 & 4 & 2 & 7 & 4 & 11 \\  
\cline{2-17}
&(1,1,1,1,1,1) & 1 & 0 & 0 & 0 & 0 & 1 & 0 & 1 & 0 & 2 & 0 & 3 & 0 & 4 & 0 \\  
\hline
\end{tabular}
\caption{We study the decomposition \eqref{eq:decomposition_Inv} $\Inv{\Hil_j^{\otimes N}}\cong\bigoplus_{\lambda} \InvL{\Hil_j^{\otimes N}}{\lambda}^{\oplus m_\lambda}$ for different values of $N=2,3,4,5,6$ and $j=\frac{1}{2},1,\ldots,7$. The table contains the dimensions of the spaces $\InvL{\Hil_j^{\otimes N}}{\lambda}$ and their multiplicities $m_\lambda$. We included only the Young diagrams $\lambda$ such that $\dim\InvL{\Hil_j^{\otimes N}}{\lambda}\neq 0$ for at least one pair $(N,j),\,N\in\{2,3,4,5,6\},\,j\in\{\frac{1}{2},1,\ldots,7\}$.}
\label{tab:dimensions}
\end{table}
\subsection{Volume operator}\label{sc:numerical_volume}
We calculated numerically the matrix of the volume operator in the tree basis and diagonalized it. In order to minimize the risk of making a mistake when entering the formulas or writing a computer code, we used two different methods.

First method is to construct the tree basis from the $3j$-symbols and calculate the matrix elements of the volume operator using the formula for matrix elements of the angular momentum operators in the spin $j$ representation (see for example \cite{Edmonds}). Let us recall that the angular momentum operator $J^3$ is diagonal:
\be
J^3\ket{j A} = A \ket{j\, A}. 
\ee
The operators $J^1$ and $J^2$ are conveniently expressed in term of operators $J_+$ and $J_-$:
\begin{eqnarray}
J^1 = \frac{1}{2} J_+ + \frac{1}{2} J_-,\quad J^2=\frac{1}{2\iu} J_+ - \frac{1}{2\iu} J_-,\\
J_+ \ket{j\, A} = \sqrt{(j-A)(j+A+1)} \ket{j\, A+1},\\ J_- \ket{j\, A} = \sqrt{(j+A)(j-A+1)} \ket{j\, A-1}.
\end{eqnarray}
The operators $J^i_I$ used in the definition of the volume operator can be written in terms of the angular momentum operators:
\be
J^i_I=-\iu\, \id\otimes\ldots\otimes \id\otimes J^i \otimes \id \otimes \ldots \otimes \id.
\ee

Second method is based on two observations, which we made already in section \ref{sc:TrV} when we calculated the overall factor of the volume operator in the case of spin $\frac{1}{2}$ monochromatic intertwiners. We will repeat the formulas for completeness.
\begin{enumerate}
\item  We recall that (see \eqref{eq:UqU}):
\be
q_{\sigma(I) \sigma(J) \sigma(K)} = U_\sigma q_{IJK} U_{\sigma}^{-1}.
\ee
Using such similarity transformations we can express each $q_{IJK}$ in terms of $q_{123}$. 
\item Let us also recall that $q_{123}$ commutes with the isomorphism (see \eqref{eq:Eiso} and \eqref{eq:qEEq}):
\be
\mathcal{E}:\bigoplus_{k}\left(\Inv{\Hil_{j}^{\otimes 3} \otimes \Hil_k} \otimes \Inv{\Hil_k\otimes \Hil_{j}^{\otimes (N-3)}} \right)\to \Inv{\Hil_{j}^{\otimes N}},
\ee
defined by:
\be
(\mathcal{E}(v\otimes w))^{A_1 \ldots A_N}=v^{A_1 A_2 A_3 B_1} \epsilon_{B_1 B_2} w^{B_2 A_4 \ldots A_N}.
\ee
As a result, in order to calculate $q_{123}$ it is enough to know its action in the space of 4-valent intertwiners $\Inv{\Hil_{j}^{\otimes 3} \otimes \Hil_k}$. The second problem has an analytic solution \cite{BrunnemannI,BrunnemannIII,BrunnemannIV}.
\end{enumerate}

We calculated the volume operator for valence $N=4,5,6$ spins $j=\frac{1}{2},1,\frac{3}{2},2,\frac{5}{2}$. The results are presented in table \ref{tab:volume}. Both methods were used for spins $j=\frac{1}{2},1,\frac{3}{2}$ and we verified that with accuracy $10^{-6}$ they give the same results (typically they coincide with accuracy $10^{-15}$ but for zero eigenvalues the accuracy was lower). For higher spins the second method was used only, because it is considerably faster (around 800 times in our tests). Each entry of the table contains an eigenvalue of the volume operator and its degeneracy. The eigenvalue provided in the table is in the units of volume $V_0=\kappa_0 \left(\frac{8\pi G \hbar \gamma}{c^3}\right)^{\frac{3}{2}}$. We assumed that two numerical eigenvalues correspond to the same true eigenvalue if the modulus of their difference does not exceed certain number $\epsilon_{\rm eigen}$. We varied the parameter $\epsilon_{\rm}$ in the range $10^{-15},10^{-14},\ldots,10^{-6}$ and obtained that the degeneracy pattern is stable in the range $10^{-14},10^{-13},\ldots,10^{-6}$.  According to analytic study \cite{BrunnemannI}, in the case of 4-valent intertwiners, the eigenvalues come in pairs except for the zero eigenvalue which is non-degenerate. The zero eigenvalue appears only in the spectrum of odd-dimensional matrices. Since the dimension of the space of 4-valent monochromatic spin $j$ intertwiners is equal $2j+1$, the zero eigenvalue appears if $j\in \mathbb{N}$. This is consistent with the numerical results that we obtained. The non-zero eigenvalues seem to have higher accuracy than zero eigenvalues. For example, our analytic result from \sref{sc:spin_half} coincides with the numerical calculation presented in the table with accuracy $10^{-15}$ but the zero eigenvalues predicted by the analytic study in the 4-valent case \cite{BrunnemannI} are recovered with accuracy $10^{-7}$.
\begin{table}[hbt!]
\begin{tabular}{|*{11}{c|}}
\hline
N & \multicolumn{2}{|c|}{j=1/2} & \multicolumn{2}{|c|}{j=1} & \multicolumn{2}{|c|}{j=3/2} & \multicolumn{2}{|c|}{j=2} & \multicolumn{2}{|c|}{j=5/2}\\ 
\hline 
\multirow{3}{*}{4} & 0.465302 & 2 & 0.000000 & 1 & 0.805927 & 2 & 0.000000 & 1 & 1.133249 & 2\\
\cline{2-11}
 &  &  & 0.930605 & 2 & 1.489473 & 2 & 1.403615 & 2 & 2.061107 & 2\\
\cline{2-11}
 &  &  &  &  &  &  & 2.137338 & 2 & 2.865738 & 2\\
\hline
\multirow{5}{*}{5} &  &  & 1.201406 & 6 &  &  & 0.000000 & 1 &  & \\
\cline{2-11}
 &  &  &  &  &  &  & 2.293525 & 5 &  & \\
\cline{2-11}
 &  &  &  &  &  &  & 2.754087 & 5 &  & \\
\cline{2-11}
 &  &  &  &  &  &  & 2.911094 & 4 &  & \\
\cline{2-11}
 &  &  &  &  &  &  & 3.095988 & 1 &  & \\
\hline
\multirow{14}{*}{6} & 0.930605 & 5 & 0.000000 & 1 & 1.810981 & 5 & 0.000000 & 1 & 2.707704 & 5\\
\cline{2-11}
 &  &  & 1.699044 & 9 & 2.527597 & 5 & 2.741033 & 9 & 3.662116 & 5\\
\cline{2-11}
 &  &  & 1.861210 & 5 & 2.783351 & 10 & 3.107533 & 5 & 4.126456 & 10\\
\cline{2-11}
 &  &  &  &  & 2.945708 & 9 & 3.761054 & 9 & 4.387806 & 9\\
\cline{2-11}
 &  &  &  &  & 3.001623 & 5 & 3.809809 & 1 & 4.392001 & 5\\
\cline{2-11}
 &  &  &  &  &  &  & 3.917444 & 16 & 4.871472 & 5\\
\cline{2-11}
 &  &  &  &  &  &  & 3.973291 & 9 & 5.118623 & 5\\
\cline{2-11}
 &  &  &  &  &  &  & 4.180309 & 10 & 5.144352 & 16\\
\cline{2-11}
 &  &  &  &  &  &  & 4.433841 & 5 & 5.344379 & 16\\
\cline{2-11}
 &  &  &  &  &  &  &  &  & 5.380997 & 10\\
\cline{2-11}
 &  &  &  &  &  &  &  &  & 5.533459 & 5\\
\cline{2-11}
 &  &  &  &  &  &  &  &  & 5.658935 & 9\\
\cline{2-11}
 &  &  &  &  &  &  &  &  & 5.734889 & 10\\
\cline{2-11}
 &  &  &  &  &  &  &  &  & 6.075997 & 1\\
\hline
\end{tabular}

\caption{The table contains the result of our numerical computation of the eigenvalues of the volume operator and their degeneracies. We diagonalized the matrices of the volume operator in the spaces $\Inv{\Hil_j^{\otimes N}}$ for $N=4,5,6$ and $j=\frac{1}{2},1,\frac{3}{2},2,\frac{5}{2}$. Each entry is a pair: an eigenvalue in the units of volume $V_0=\kappa_0 \left(\frac{8\pi G \hbar \gamma}{c^3}\right)^{\frac{3}{2}}$ and its degeneracy. The matrix and its eigenvalues are calculated numerically. We consider two numerical eigenvalues to correspond to the same true eigenvalue if their difference is not greater than $\epsilon_{\rm eigen}=10^{-6}$.}
\label{tab:volume}
\end{table}
\subsection{Degeneracy of the eigenvalues of the volume operator}\label{sc:deg_eigen}
In most of the cases that we calculated, our analysis can be used to predict the pattern of the degeneracy of the eigenstates. For example, by looking at the entries for $N=4, j=1$ in table \ref{tab:dimensions} we infer that the volume operator has
\begin{itemize}
\item $1$ eigenvalue with degeneracy at least $1$ corresponding to $\InvL{\Hil_1^{\otimes 4}}{\lambda}$, $\lambda=(4)$,
\item $1$ eigenvalue with degeneracy at least $2$ corresponding to $\InvL{\Hil_1^{\otimes 4}}{\lambda}$, $\lambda=(2,2)$.
\end{itemize}
It may very well turn out that there is only $1$ eigenvalue with degeneracy $3$ but a quick glimpse at table \ref{tab:volume} reveals that there are $2$ eigenvalues with degeneracies precisely $1$ and $2$. In our examples this pattern repeats in most of the cases: the eigenvalues corresponding to different spaces $\InvL{\Hil_j^{\otimes N}}{\lambda}$ are different. Exceptions are $N=4, j=\frac{3}{2}$ and $N=4, \frac{5}{2}$. The numerical results seem to suggest that the eigenvalues corresponding to spaces for $\lambda=(4)$ and $\lambda=(1,1,1,1)$ are the same.

If $\InvL{\Hil_j^{\otimes N}}{\lambda}$ has dimension greater than $1$ then in principle we do not know if the eigenvalues corresponding to $\InvL{\Hil_j^{\otimes N}}{\lambda}$ are the same or different. In our calculation, it turns out that such eigenvalues are different from each other. For example, in the case $N=6, j=\frac{3}{2}$ we could expect that the volume operator has
\begin{itemize}
\item $1$ eigenvalue with degeneracy $5$ corresponding to $\InvL{\Hil_{\frac{3}{2}}^{\otimes 6}}{\lambda}$, $\lambda=(5,1)$,
\item $1$ eigenvalue with degeneracy $10$ corresponding to $\InvL{\Hil_{\frac{3}{2}}^{\otimes 6}}{\lambda}$, $\lambda=(4,1,1)$,
\item $2$ eigenvalues with degeneracy $5$ \textbf{or} $1$ eigenvalue with degeneracy $10$ corresponding to $\InvL{\Hil_{\frac{3}{2}}^{\otimes 6}}{\lambda}$, $\lambda=(3,3)$,
\item $1$ eigenvalue with degeneracy $9$ corresponding to $\InvL{\Hil_{\frac{3}{2}}^{\otimes 6}}{\lambda}$, $\lambda=(2,2,1,1)$.
\end{itemize}
A quick glimpse at table \ref{tab:volume} reveals that to the Hilbert space $\InvL{\Hil_{\frac{3}{2}}^{\otimes 6}}{\lambda}$, $\lambda=(3,3)$ correspond $2$ eigenvalues with degeneracy $5$.

\section{Summary and outlook}
The Rovelli-Smolin-DePietri volume operator can be defined by its action in the spaces of tensor products $\Hil_{j_1}\otimes \ldots \Hil_{j_N}$, where $\Hil_j$ is the spin $j$ representation space of the SU(2) group. We studied the sector of monochromatic spin $j$ intertwiners $\Inv{\Hil_j^{\otimes N}}$. There is a natural right action of the permutation group $S_N$ on the space $\Hil_j^{\otimes N}$:
\be
(v_1 \otimes \ldots \otimes v_N) \cdot \sigma = v_{\sigma(1)}\otimes \ldots \otimes v_{\sigma(N)},\quad \forall \sigma\in S_N.
\ee
It descends to an action on $\Inv{\Hil_j^{\otimes N}}$. The Rovelli-Smolin-DePietri volume operator commutes with this action. We studied consequences of this fact.

We found a decomposition of the space of spin $j$ monochromatic intertwiners  $\Inv{\Hil_j^{\otimes N}}$:
\be
 \Inv{\Hil_j^{\otimes N}} = \bigoplus_\lambda \bigoplus_{T_\lambda} \InvLambda{\Hil_j^{\otimes N}},
\ee
where $\lambda$ runs through all Young diagrams with $N$ boxes and $T_\lambda$ runs through all standard $\lambda$-tableaux (we recall the definitions of Young diagram and standard $\lambda$-tableau in \sref{sc:Young_diagrams}). In a basis adapted to this decomposition the Rovelli-Smolin-DePietri volume operator takes a block diagonal form. For fixed $\lambda$ the blocks corresponding to different $T_\lambda$ are isospectral. In the case of spin $1/2$ intertwiners the space $\Inv{\Hil_{\frac{1}{2}}^{\otimes N}}$ is non-trivial for $N$ even. In this case, only one Young diagram contributes -- it has two rows with $N/2$ boxes each. Each space $\InvLambda{\Hil_{\frac{1}{2}}^{\otimes N}}$ is 1-dimensional. This shows that the Rovelli-Smolin-DePietri volume operator is proportional to identity in the spaces of spin $1/2$ monochromatic intertwiners $\Inv{\Hil_{\frac{1}{2}}^{\otimes N}}$. 

This paper aims at simplifying the diagonalization problem of the Rovelli-Smolin-DePietri volume operator. In the spin $1/2$ case we solved the diagonalization problem analytically. Since the volume operator is proportion to identity in the spaces of spin $1/2$ intertwiners, it sufficed to calculate the overall factor. We obtained that in the space of spin $1/2$ monochromatic intertwiners $\Inv{\Hil_{\frac{1}{2}}^{\otimes N}}$ the volume operator takes the following form:
\be\label{eq:summary_volume_half_spin}
V_{(\frac{1}{2},\ldots,\frac{1}{2})}=\frac{\kappa_0}{8} \left(\frac{8\pi G \hbar \gamma}{c^3}\right)^{\frac{3}{2}} \sqrt{\frac{\sqrt{3}}{3!} (N-2)N(N+2)}\cdot \id.
\ee
For higher spins the simplification is not so drastic but still significant. The method presented in this paper allows to construct a basis adapted to the decomposition, in which the volume operator takes a block diagonal form. In the numerical study this may turn out to be useful, especially for large matrices. However, further work is needed to make this application practical. The calculation of the matrix elements of the Rovelli-Smolin-DePietri matrix becomes computationally intense when the valence of the intertwiner $N$ grows. One reason is that one should sum $\binom{N}{3}$ matrices. Another reason is that the dimension of the matrices grows when $N$ increases. We expect that in practical application one could overcome the second problem by calculating the blocks corresponding to $\InvLambda{\Hil_{\frac{1}{2}}^{\otimes N}}$ instead of trying to calculate the whole matrix at first place. There is a chance that the first problem can be simplified by using the properties of the basis such as symmetry or antisymmetry of the indices.

The spin $1/2$ case is interesting itself. It could be used as a toy model. It could be also a more fundamental assumption in the models studied -- that the quantum space is built from a large number of small grains described by the spin $1/2$ monochromatic intertwiners. The result concerning spin $1/2$ intertwiners finds an immediate application to our approach to homogeneous-isotropic sector of loop quantum gravity \cite{HomogeneousIsotropicLQG}. Consider scalar constraint operator adding and subtracting loops labeled with spin $1/2$ acting between the spaces of spin $1/2$ monochromatic spin networks. In \cite{HomogeneousIsotropicLQG} we introduced an ad hoc cut-off in the number of loops. Adding a loop at a node $\node$ increases the valence $N$ of the space of intertwiners associated to $\node$ by $2$. From \eqref{eq:summary_volume_half_spin} we immediately see that the cut-off in the number of loops translates into a cut-off in the volume.

Let us notice that the valence $N_\node$ of a node $\node$ can be written in the following form: $N_{\node}=V_\node+2 L_\node$, where $V_{\node}$ is the valence of the node excluding loops and $L_{\node}$ is the number of loops at the node $\node$. As a result, the volume operator from \eqref{eq:summary_volume_half_spin} can be expressed by the number operator $\mathcal{N}_\node$ introduced in \cite{AssanioussiPolymerQuantization,AssanioussiGraphCoherentStates}. This observation can be used to study the coherence properties of the new graph coherent states with respect to the volume operator.

The analysis that we performed applies not only to the volume operator but to any element in the commutator algebra 
\be
B=\{\varphi\in\End{U} : \varphi(v\cdot g)=\varphi(v)\cdot g, \forall v\in U, g\in G\},
\ee
where $U=\Inv{\Hil_j^{\otimes N}}, G=S_N$. For example the Lorentzian part of the Hamiltonian operator proposed in \cite{HamiltonianOperator} has the form:
\be\label{eq:lorentzian_int}
\hat{H}_{\node}^{L_{\rm int}} = \sum_{\link,\link'} \varepsilon(\dot{\link},\dot{\link}') \hat{H}_{\node\, \link,\link'}^L,
\ee
where the sum is over all pairs of links incident at the node $\node$, $\varepsilon(\dot{\link},\dot{\link}')=0$ if the vectors tangent to the links $\link,\link'$ at the node $\node$ are linearly dependent and $1$ otherwise. If we had replaced the operator with
\be\label{eq:lorentzian_ext}
\hat{H}_{\node}^{L_{\rm ext}} = \sum_{\link,\link'} |\hat{H}_{\node\, \link,\link'}^L|,
\ee
we would obtain an element of the commutator algebra, to which our techniques could be applied. It should be verified if both expressions \eqref{eq:lorentzian_int} and \eqref{eq:lorentzian_ext} correspond to the same classical object. One could also make a similar change in the definition of the Euclidean part of the Hamiltonian operator. In this case the operator changes the graph and therefore maps one space of intertwiners into another space of intertwiners. We expect that the techniques from this paper can be extended to this case.
\section*{Acknowledgements}
This work was supported by the National Science Centre, Poland grant No. 2018/28/C/ST9/00157. 
\appendix
\section*{References}
\bibliographystyle{unsrt}
\bibliography{LQG}{}
\end{document}